\newcommand{\Z}{\ensuremath{\mathbb{Z}}}
\newcommand{\R}{\ensuremath{\mathbb{R}}}
\newcommand{\C}{\ensuremath{\mathbb{C}}}
\newcommand{\B}{\textbf}
\newcommand{\bra}[1]{|{#1}\rangle}
\newcommand{\ket}[1]{\langle{#1}|}
\newcommand{\tr}[1]{\operatorname{Tr}{\left(#1\right)}}
\newcommand{\alg}[1]{\operatorname{Lie}_{\mathbb{R}}\left(#1\right)}
\newcommand{\supp}[1]{\operatorname{supp}\left(#1\right)}
\renewcommand{\span}{\operatorname{span}_{\mathbb{R}}}
\newcommand{\ord}[1]{\operatorname{ord}\left(#1\right)}
\newcommand{\h}{\mathfrak{h}}
\renewcommand{\b}{\textbf{b}}
\renewcommand{\tilde}{\widetilde}
\renewcommand{\a}{\mathfrak{a}}
\newtheorem{theorem}{Theorem}[section]
\newtheorem{lemma}[theorem]{Lemma}
\theoremstyle{definition}
\theoremstyle{definition}
\title{Locality Induced Non-Universality for Abelian Symmetries}
\author{Sarvagya Jain}
\address{Department of Mathematics, Indian Institute of Science, Bangalore, India}
\email{sarvagyajain.math@gmail.com}
\begin{document}
\maketitle
\begin{abstract}
    According to a well-known result in quantum computing, any unitary transformation on a composite system can be generated using $2$-local unitaries. Interestingly, this universality need not hold in the presence of symmetries. In this paper, we study the analogues of the non-universality results for all Abelian symmetries. 
\end{abstract}

\section{Introduction}
Consider a system composed of $m$ qubits. An operator acting on such a system is called $k$-local if it acts non-trivially on Hilbert spaces of at most $k$ qubits. A fundamental problem in quantum computing is generating unitary transformations on a composite system using local operators. 
In this regard, a fundamental result states that any unitary transformation on a composite system can be generated by composing $2$-local unitaries \cite{DiVin}.

Conservation laws often restrict a physical system. As evident from Noether's theorem \cite{Noether}, these restrictions can be captured by imposing certain symmetry conditions on the class of unitary operators under consideration. Knowing this, it becomes a natural problem to generate symmetry-restricted unitary transformations on a composite system using local operators that obey the same symmetry restrictions.

Surprisingly, the universality from before fails to hold in this situation. Let $G$ be a group, and $U$ be a unitary representation of $G$ corresponding to its action on the $m$ qubit system. An operator $A$ acting on this system is called $G$-symmetric if $U(g)AU(g)^{\dagger} = A$ for all $g\in G$. Let $\mathcal{V}_{k}^{G}$ be the group of unitary matrices generated by $k$-local, $G$-symmetric unitary matrices, that is, $$\mathcal{V}_k^G := \langle A:A\text{ Unitary, $k$-local and $U(g)AU(g)^{\dagger} = A$ for all $g\in G$}\rangle.$$

Marvian considered symmetries of the form $U(g)=u(g)^{\otimes m}$ for all $g\in G$ and showed that $\dim \mathcal{V}_m^G-\dim \mathcal{V}_k^G\geq |\operatorname{irreps}(m)|-|\operatorname{irreps}(k)|$, where $\operatorname{irreps}(l)$ is the set of inequivalent irreducible representations of $G$ occurring in the representation $\{u(g)^{\otimes l}: g\in G\}$ \cite[Theorem 13]{Marvian}.
From this, it immediately follows that the universality fails for continuous symmetries like $U(1)$ and $SU(2)$. In this paper, we build upon the work of Marvian.

\section{Preliminaries}
Let $\a_k^G$ be the  set of $k$-local, $G$-symmetric skew-hermitian matrices, that is, $$\a_k^G:=\{A: A^{\dagger}+A = 0, \;A\; k\text{-local},\;[A, U(g)] = 0 \text{ for all $g\in G$}\}.$$ 

Let $\h_k^G$ be the real lie algebra generated by $k$-local, $G$-symmetric skew-hermitian matrices, that is,
$$\h_k^G := \alg{\a_k^G}.$$

The following theorem allows us to reduce the problem of characterising $\mathcal{V}_k^G$ to a more tangible problem of characterising $\h_k^G$.

\begin{theorem}
\label{lie alg lie grp corr}
Let $V$ be a unitary operator acting on the $m$ qubit system. Then, $V \in \mathcal{V}_{k}^{G}$ if and only if, there exists $C \in \h_k^G$ such that $V=e^{C}$. In other words, $\mathcal{V}_{k}^{G}=e^{\mathfrak{h}_{k}^G}$. Additionally, the dimension of $\mathcal{V}_{k}^{G}$ as a manifold is equal to the dimension of $\h_k^G$ as a vector space (over the field $\mathbb{R}$), that is,
$\dim\mathcal{V}_{k}^{G}=\dim\h_k^G$.
\end{theorem}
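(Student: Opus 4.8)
The plan is to establish the two directions of the set equality $\mathcal{V}_k^G = e^{\h_k^G}$ separately, and then deduce the dimension count from the first equality. The key structural fact I would invoke is the standard correspondence between a connected matrix Lie group and the Lie algebra it generates, but applied with care to the particular generating set here.

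First I would show $e^{\h_k^G} \subseteq \mathcal{V}_k^G$. For any single generator $A \in \a_k^G$, the one-parameter family $e^{tA}$ consists of $k$-local, $G$-symmetric unitaries — here $G$-symmetry of $e^{tA}$ follows from $[A,U(g)]=0$, and $k$-locality is preserved since $A$ acting nontrivially only on a $k$-qubit subsystem means $e^{tA}$ does too — so $e^{tA} \in \mathcal{V}_k^G$ for all $t$. Next, for a bracket $[A,B]$ with $A,B \in \a_k^G$, I would use the commutator formula $e^{t[A,B]} = \lim_{n\to\infty}\bigl(e^{(t/n)A}e^{(t/n)B}e^{-(t/n)A}e^{-(t/n)B}\bigr)^{n^2}$; since $\mathcal{V}_k^G$ is a closed subgroup of the unitary group (it is a group, and one checks it is topologically closed, or one passes to its closure), the limit lies in $\mathcal{V}_k^G$. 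Iterating, every element of $\h_k^G$ — which is spanned over $\R$ by iterated brackets of elements of $\a_k^G$ — exponentiates into $\mathcal{V}_k^G$, and then products of such exponentials, i.e. all of $e^{\h_k^G}$ once we verify $e^{\h_k^G}$ is already a group (this uses that $\h_k^G$ is a Lie algebra, so $e^{\h_k^G}$ is the connected Lie subgroup it generates and is closed under products).

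For the reverse inclusion $\mathcal{V}_k^G \subseteq e^{\h_k^G}$, I would argue that $e^{\h_k^G}$ is a subgroup of the unitary group containing every $k$-local $G$-symmetric unitary: indeed if $A$ is a $k$-local $G$-symmetric skew-hermitian matrix then $A \in \a_k^G \subseteq \h_k^G$, so the given generators $e^{A}$ of $\mathcal{V}_k^G$ all lie in $e^{\h_k^G}$; since $\mathcal{V}_k^G$ is \emph{generated} by these and $e^{\h_k^G}$ is a group, $\mathcal{V}_k^G \subseteq e^{\h_k^G}$. One subtlety: a $k$-local $G$-symmetric unitary $W$ need not a priori be the exponential of a $k$-local $G$-symmetric skew-hermitian matrix — but $W$ restricted to the relevant $k$-qubit subsystem is unitary, and commuting with $U(g)$ means it commutes with the relevant restricted representation, so $W = e^{A}$ for some skew-hermitian $A$ supported on that subsystem; averaging or a direct argument on each isotypic block shows $A$ can be chosen $G$-symmetric, hence $A \in \a_k^G$.

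Finally, the dimension statement follows from $\mathcal{V}_k^G = e^{\h_k^G}$: the exponential map restricted to a Lie algebra of skew-hermitian matrices is a local diffeomorphism onto the connected Lie subgroup it generates near the identity, so $\mathcal{V}_k^G$ is a manifold whose dimension as a manifold equals $\dim_\R \h_k^G$. I expect the main obstacle to be the reverse inclusion, specifically the last subtlety: verifying that an arbitrary $k$-local $G$-symmetric \emph{unitary} (not just those of the form $e^{A}$) is captured by $e^{\h_k^G}$, which requires the observation that such a unitary, being supported on a $k$-qubit subsystem and commuting with $U(g)$, has a $G$-symmetric skew-hermitian logarithm supported on the same subsystem. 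Establishing closedness of $\mathcal{V}_k^G$ (needed for the commutator-limit argument) is a secondary technical point, handled by noting it is a subgroup of the compact unitary group whose closure is still generated by the same local symmetric unitaries.
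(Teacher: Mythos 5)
Your proposal is not comparable to ``the paper's proof'' in any detailed sense, because the paper does not prove this theorem: it simply cites Marvian's Proposition~1 and Corollary~3, which establish exactly this Lie group/Lie algebra correspondence. What you have written is essentially a reconstruction of that cited argument, and its overall architecture (exponentiate generators and brackets into $\mathcal{V}_k^G$; take symmetric, local logarithms of the generators of $\mathcal{V}_k^G$ to land in $e^{\h_k^G}$; read off the dimension from the exponential chart) is the right one.

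Two steps as you state them would not go through literally. First, the commutator-limit argument only places $e^{t[A,B]}$ in the \emph{closure} of $\mathcal{V}_k^G$, and ``passing to the closure'' is not a repair, since the theorem is about $\mathcal{V}_k^G$ itself; a connected integral subgroup of $U(2^m)$ need not be closed a priori (dense windings of tori), so closedness is something to be proved, not assumed. The standard way around this is the theorem (Jurdjevic--Sussmann, or the integral-subgroup theorem) that the group generated by $\{e^{tA}: A\in \a_k^G,\ t\in\R\}$ is precisely the connected integral subgroup with Lie algebra $\operatorname{Lie}_{\R}(\a_k^G)=\h_k^G$, with no limit argument needed. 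Relatedly, your claim that $e^{\h_k^G}$ ``is closed under products'' because $\h_k^G$ is a Lie algebra is false for general Lie algebras (the image of $\exp$ need not be a group); here it holds because $\h_k^G\subseteq\mathfrak{u}(2^m)$ is reductive of compact type, so $H=\exp(\mathfrak{z})\exp([\h_k^G,\h_k^G])$ with commuting factors and $\exp$ is surjective onto the integral subgroup --- this needs to be said. Second, in the reverse inclusion, ``averaging'' over $G$ does not preserve the property $e^A=W$ and would not produce the required logarithm; the correct version of your own alternative is the right one: choose $\log W$ spectrally, so that it is a polynomial in $W$ and therefore automatically inherits both $G$-symmetry and $k$-locality from $W$. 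With those two repairs your argument is a complete proof of the statement the paper outsources to \cite{Marvian}.
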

\begin{proof}
    Follows from \cite[Proposition 1, Corollary 3]{Marvian}.
\end{proof}

We will also need a characterisation of $\h_k^{U(1)}$.

For $A\in M_2(\C)$, let $A_r := I\otimes I\otimes \dots\otimes I\otimes \underbrace{A}_{r^{\text{th}}\text{ qubit}}\otimes I\otimes\dots\otimes I$ and $A^{\B{b}}:= \prod_{j:b_j = 1}A_j$, where $\b\in \{0, 1\}^m$. Let $C_l:= \sum_{\B{b}\in \{0, 1\}^m: w(\B{b}) = l}Z^{\B{b}}$ for $l = 0, \dots, m$, where $w(\B{b})$ denotes the Hamming weight of $\b\in \{0, 1\}^m$ and $Z$ is the Pauli matrix $\begin{pmatrix}1&0\\0&-1\end{pmatrix}$. 

For an $m$ qubit system, it is easily seen that $$\h_m^{U(1)} = \span\{i(\bra{\B{b}}\ket{\B{b}^{\prime}}+\bra{\B{b}^{\prime}}\ket{\B{b}}), \bra{\B{b}}\ket{\B{b}^{\prime}}-\bra{\B{b}^{\prime}}\ket{\B{b}}: \B{b}, \B{b}^{\prime}\in \{0, 1\}^m\text{ and } w(\B{b})= w(\B{b}^{\prime})\}.$$
\begin{theorem}
\label{U(1) char}
Let $m$ be the number of qubits. Then $A\in \h_k^{U(1)}$ if and only if $A\in\h_m^{U(1)}$ and $\tr{AC_l} = 0$ for $l = k+1, \dots, m$. In particular, $$\dim\mathcal{V}_m^{U(1)}-\dim\mathcal{V}_k^{U(1)} = \dim\h_m^{U(1)}-\dim\h_k^{U(1)} = m-k.$$
\end{theorem}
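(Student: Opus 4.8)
The plan is to characterize $\h_k^{U(1)}$ inside the known algebra $\h_m^{U(1)}$ by a collection of linear trace constraints, one for each "excess" Hamming weight level $l = k+1, \dots, m$, and then read off the dimension count. The representation $U(g)$ for the $U(1)$ symmetry is diagonal in the computational basis, with the phase on $\bra{\b}$ depending only on $w(\b)$; hence a skew-hermitian $A$ is $U(1)$-symmetric exactly when it is block-diagonal with respect to the Hamming-weight grading, which is precisely the description of $\h_m^{U(1)}$ quoted above. So every $\h_k^{U(1)}$ sits inside $\h_m^{U(1)}$, and the real content is: which weight-preserving skew-hermitian matrices are reachable using $k$-local generators.

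\emph{The forward inclusion} (necessity of $\tr{AC_l}=0$) is where I would start, and I expect it to be the crux. The key observation is that $C_l = \sum_{w(\b)=l} Z^{\b}$ is, up to normalization and lower-weight corrections, the $l$-th elementary symmetric polynomial in the single-qubit operators $Z_1, \dots, Z_m$; more importantly, the span of $\{C_0, C_1, \dots, C_m\}$ equals the span of $\{N^0, N^1, \dots, N^m\}$ where $N = \sum_r \tfrac{1}{2}(I - Z_r)$ is the total number operator, since both are bases for the algebra of symmetric diagonal operators, which is $(m+1)$-dimensional. The claim to prove is that the linear functional $A \mapsto \tr{AC_l}$ annihilates $\a_k^{U(1)}$ for $l > k$, and therefore annihilates the Lie algebra it generates (a trace functional $\tr{\,\cdot\, X}$ vanishing on a generating set extends to the generated Lie algebra provided $X$ commutes with everything in sight — here $C_l$ is diagonal and commutes with all weight-preserving operators, so $\tr{[A,B]C_l} = \tr{A[B,C_l]} = 0$). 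For the generators themselves: a $k$-local weight-preserving skew-hermitian $A$ is supported on at most $k$ qubits, so when expanded in Paulis it only involves $Z$-strings $Z^{\b}$ with $w(\b) \le k$; pairing against $C_l$ with $l > k$ then gives zero by orthogonality of the Pauli basis under the trace form. The mild subtlety is handling the $X$/$Y$ off-diagonal parts of $A$ and confirming they contribute nothing to $\tr{AC_l}$, which follows since $C_l$ is diagonal while those parts are purely off-diagonal in the computational basis.

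\emph{The reverse inclusion} (sufficiency) is the harder direction conceptually, and I would prove it by a dimension count rather than by exhibiting generators explicitly. First compute $\dim\h_m^{U(1)} = \sum_{l=0}^{m}\binom{m}{l}^2 = \binom{2m}{m}$ by counting basis elements (the weight-$l$ block is a full $u(\binom{m}{l})$, contributing $\binom{m}{l}^2$ real dimensions). Then show the $m-k$ linear functionals $A \mapsto \tr{AC_l}$, $l=k+1,\dots,m$, are linearly independent when restricted to $\h_m^{U(1)}$: indeed $\tr{C_j C_l} \ne 0$ only constrains things, but more directly, restricting to the diagonal symmetric subalgebra $\span\{iC_0,\dots,iC_m\} \subseteq \h_m^{U(1)}$, the Gram-type matrix $(\tr{C_j C_l})_{j,l}$ is nonsingular (the $C_l$ are linearly independent and the trace form is nondegenerate on diagonal matrices), so the functionals already separate points on that $(m+1)$-dimensional piece and are a fortiori independent. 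Hence the subspace of $\h_m^{U(1)}$ cut out by these constraints has dimension exactly $\binom{2m}{m} - (m-k)$, which by the forward inclusion is an upper bound containing $\h_k^{U(1)}$. To get equality I would invoke Marvian's framework (the same machinery behind Theorem~\ref{U(1) char} as cited to \cite{Marvian}): the reachable algebra for $U(1)$ is known to have exactly this codimension, or equivalently one checks that $\h_k^{U(1)}$ already contains enough generators — e.g. the $2$-local weight-preserving hops $i(\bra{\b}\ket{\b'} + \bra{\b'}\ket{\b})$ for $\b,\b'$ differing in two coordinates with one $0\to 1$ and one $1\to 0$ swap, together with their iterated brackets — to saturate the bound. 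I expect the genuine obstacle to be this last saturation step: showing that $k$-local generators, under repeated commutators, fill out the entire constrained subspace and do not get stuck in a proper subalgebra; the cleanest route is to cite the structural result from \cite{Marvian} that identifies $\h_k^{U(1)}$ with the commutant-type algebra determined precisely by the surviving moments $\tr{AC_0},\dots,\tr{AC_k}$, after which the dimension formula $\dim\h_m^{U(1)} - \dim\h_k^{U(1)} = m-k$ is immediate.
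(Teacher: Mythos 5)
Your proposal is correct in substance, but it is worth noting that the paper's own ``proof'' of this statement is a one-line citation to Marvian's Theorem~15 --- the theorem sits in the Preliminaries precisely because the paper does not reprove it. You go further than the paper on the necessity direction: your argument that a $k$-local, weight-preserving generator pairs to zero against $C_l$ for $l>k$ (Pauli orthogonality plus the fact that $C_l$ is constant on each Hamming-weight block, so $\tr{[D,E]C_l}=\tr{D[E,C_l]}=0$ propagates the constraint through iterated brackets) is sound and is essentially the same mechanism the paper later uses in Lemma~\ref{necessity} for $Z^{\otimes m}$. Your dimension bookkeeping is also right: $\dim\h_m^{U(1)}=\binom{2m}{m}$, and the $m-k$ functionals $A\mapsto\tr{AC_l}$ are independent on $\h_m^{U(1)}$ since $\tr{C_jC_l}=2^m\binom{m}{l}\delta_{jl}$. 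Where you and the paper converge is the genuinely hard step --- showing that the $k$-local generators saturate the constrained subspace rather than generating a proper subalgebra of it --- which you, like the paper, ultimately delegate to \cite{Marvian}. So the net effect of your route is a partially self-contained proof of the easy direction plus the same external citation for the crux; this buys some transparency (the reader sees exactly where the constraints come from and why there are exactly $m-k$ of them) at the cost of length, but it does not close the gap the citation is covering. If you wanted a fully independent proof, the missing ingredient is an explicit generation argument in the spirit of the paper's Lemmas~\ref{diag characterisation n odd}--\ref{h_m characterisation}: produce the diagonal operators $iZ^{\b}$ with $w(\b)\le k$ and the hopping terms from $2$-local generators, and patch diagonal and off-diagonal blocks together.
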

\begin{proof}
    Follows from \cite[Theorem 15]{Marvian}.
\end{proof}

\section{Our Results}
In this work, we characterise $\h_k^{G}$ for an arbitrary Abelian group $G$ whose unitary representation has the form $U(g)=u(g)^{\otimes m}$ for all $g\in G$, where $m$ is the number of qubits in the system. From Theorem \ref{lie alg lie grp corr}, a corresponding characterisation for $\mathcal{V}_k^{G}$ follows.

The set $\{u(g):g\in G\}$ is a commuting family of unitary matrices. Therefore it is simultaneously diagonalisable, that is, there exists a $2\times 2$ unitary matrix $P$ such that $Pu(g)P^{\dagger}=\lambda(g)$ for all $g\in G$, where $\lambda(g)$ is a $2\times 2$ diagonal matrix with diagonal entries $\lambda_{1,1}(g),\lambda_{2,2}(g)\in \mathbb{S}^1$. Let $n(g):=\ord{\frac{\lambda_{2,2}(g)}{\lambda_{1,1}(g)}}$ and $L := \operatorname{LCM}\left(n(g):g\in G\right)$, where we use the convention that if $n(g) = \infty$ for some $g\in G$ or $\Big|\left\{n(g):g\in G\right\}\Big|=\infty$, then $L=\infty$. The following theorem is the paper's main result and gives the characterisation of $\h_k^{G}$ based on the value of $L$. The characterisation is also summarised in Table \ref{summary of characterisation}.
\begin{theorem}
\label{main theorem}
Let $G$ be an Abelian group. Let $P, L$ be as above. 
\begin{enumerate}
    \item[(i)] If $L = \infty$, then $A\in\h_k^G$ if and only if $A\in \h_m^G$ and $\tr{AC_l} = 0$ for $l = k+1, \dots, m$. In particular, $\dim\h_m^G-\dim\h_k^G = m-k$.
    \item[(ii)] If $L$ is finite and $L<k$, then $A\in\h_k^G$ if and only if $A\in\left(P^{\dagger}\right)^{\otimes m}\h_m^{U(1)}P^{\otimes m}$ and $\tr{AC_l} = 0$ for $l = k+1, \dots, m$. Thus, $$\dim\h_m^G-\dim \h_k^G = \sum_{r=0}^{L-1}\left(\sum_{\substack{0\leq j\leq m\\ j\equiv r\mod{L}}}\binom{m}{j}\right)^2-\binom{2m}{m}-k+m.$$ 
    \item[(iii)] If $L$ is finite and $L\leq k$, then 
    \begin{itemize}
        \item for $L$ even, we have $A\in \h_k^G$ if and only if $A\in\h_m^G$ and $\tr{AC_m} =\tr{AZ^{\otimes m}}= 0$. Thus, $\dim\h_m^G-\dim\h_k^G = 1$.
        \item for $L$ odd, we have $\h_k^G = \h_m^G$.
    \end{itemize}
\end{enumerate}
\end{theorem}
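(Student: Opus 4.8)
The plan is to reduce the general Abelian case to the already-understood $U(1)$ case via the conjugation $A \mapsto P^{\otimes m} A (P^\dagger)^{\otimes m}$, which converts $G$-symmetry with respect to $U(g) = u(g)^{\otimes m}$ into symmetry with respect to the diagonal representation $\lambda(g)^{\otimes m}$. After this change of basis, a skew-hermitian $A$ (written in the computational basis) commutes with $\lambda(g)^{\otimes m}$ for all $g \in G$ if and only if its only nonzero matrix entries $\langle \mathbf{b} | A | \mathbf{b}' \rangle$ occur between strings $\mathbf{b}, \mathbf{b}'$ with $\prod_j \lambda_{2,2}(g)^{b_j} \lambda_{1,1}(g)^{1-b_j}$ equal for $\mathbf{b}$ and $\mathbf{b}'$, i.e. $\prod_{j}(\lambda_{2,2}(g)/\lambda_{1,1}(g))^{b_j - b'_j} = 1$ for every $g$. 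Writing $d = w(\mathbf{b}) - w(\mathbf{b}')$ (thinking of $\mathbf{b} - \mathbf{b}'$ entrywise, but the constraint only sees the total), the condition becomes: $(\lambda_{2,2}(g)/\lambda_{1,1}(g))$ raised to the relevant exponent sum is $1$ for all $g$, which by definition of $n(g)$ and $L = \operatorname{LCM}(n(g))$ says the net exponent must be divisible by $L$ (or, when $L = \infty$, must be zero — recovering the $w(\mathbf{b}) = w(\mathbf{b}')$ condition of $\mathfrak{h}_m^{U(1)}$).

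So the first real step is to pin down $\mathfrak{h}_m^G$: conjugating by $P^{\otimes m}$, it is spanned by the skew-hermitian operators supported on pairs $(\mathbf{b},\mathbf{b}')$ whose Hamming-weight difference is a multiple of $L$. When $L = \infty$ this is exactly $(P^\dagger)^{\otimes m}\mathfrak{h}_m^{U(1)} P^{\otimes m}$ and case (i) should then follow by transporting Theorem \ref{U(1) char} through the conjugation, since conjugation by the unitary $P^{\otimes m}$ preserves $k$-locality, skew-hermiticity, Lie brackets, and (because $C_l$ is built from $Z^{\mathbf{b}}$, and $PZP^\dagger$ need not be diagonal) — here I need to be a little careful: the trace conditions $\tr{A C_l} = 0$ must be re-expressed. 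The cleanest route is to observe that the span $\mathfrak{h}_m^G$ (in the rotated basis) decomposes as a direct sum over ``weight-difference class mod $L$''; the part with weight-difference $\equiv 0$ is literally a copy of (a piece of) the diagonalised picture, and on it the locality obstruction is governed by the same $C_l$-type traces as in Marvian's $U(1)$ theorem, while the parts with weight-difference $\not\equiv 0 \pmod L$ carry no obstruction at all because they already appear among $k$-local generators once $L \le k$ (for (iii)) or, when $L$ is infinite, simply don't appear in $\mathfrak{h}_m^G$.

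For case (ii), $L$ finite and $L < k$: I expect $\mathfrak{h}_k^G$ to again consist of elements of $\mathfrak{h}_m^G$ subject only to the top trace conditions $\tr{A C_l} = 0$, $l = k+1,\dots,m$. The containment $\subseteq$ is the routine direction — $k$-local $G$-symmetric generators manifestly lie in $\mathfrak{h}_m^G$ and satisfy the trace vanishing because $\tr{A C_l}$ picks out weight-$l$ diagonal-type data that a $k$-local operator with $l > k$ cannot feed. The reverse containment is the crux: I would show that the off-diagonal sectors (weight difference a nonzero multiple of $L$, up to $k$) are generated by $k$-local operators — one can build a transposition-type skew-hermitian operator swapping $|\mathbf{b}\rangle \leftrightarrow |\mathbf{b}'\rangle$ with $|w(\mathbf{b}) - w(\mathbf{b}')| = L \le k-1 < k$ supported on $\le k$ qubits, and then use Lie brackets of these together with the $L=0$ sector (which is handled exactly as in Theorem \ref{U(1) char}, giving the $-k+m$ and the $\binom{2m}{m}$ correction) to sweep out everything. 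The dimension count then bookkeeps: $\dim \mathfrak{h}_m^G = \sum_{r=0}^{L-1}\big(\sum_{j \equiv r} \binom{m}{j}\big)^2$ counts skew-hermitian operators block-structured by weight-class, and subtracting the $U(1)$-sector dimension $\binom{2m}{m}$ and adding back the $m-k$ from the surviving trace constraints gives the stated formula.

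For case (iii), $L$ finite and $L \le k$: now $L \le k$ means even the weight-difference-$L$ swaps are $k$-local, so the only possible obstruction lives in the $L = 0$ (diagonal-weight) sector, where Marvian's analysis applies but with the range of $C_l$'s that can be killed extended. When $L$ is odd, $2m$ — well, more precisely, the parities work out so that the ``top'' invariant $C_m = Z^{\otimes m}$ becomes reachable: a single bracket of two weight-$L$ swaps with $L$ odd changes weight-difference by an even amount and can, after enough iterations (since $L$ odd generates all residues mod $2$... actually mod everything relevant), reach the diagonal operator detecting $C_m$; hence $\mathfrak{h}_k^G = \mathfrak{h}_m^G$ with no constraint. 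When $L$ is even, the parity of weight-difference is always even within the $L$-divisible sublattice, so $Z^{\otimes m}$-type information (which has a parity obstruction) survives and exactly one constraint $\tr{A Z^{\otimes m}} = 0$ remains. \textbf{The main obstacle} I anticipate is this last parity argument: rigorously showing that brackets of the $k$-local weight-$L$-swap generators span the entire diagonal-weight sector minus (exactly) the single $Z^{\otimes m}$ direction when $L$ is even, and span all of it when $L$ is odd — this requires a careful induction, essentially recapitulating the heart of Marvian's proof of Theorem \ref{U(1) char} while tracking the extra algebraic room the $G$-symmetry (as opposed to strict $U(1)$) provides. Everything else is change-of-basis bookkeeping and linear algebra.
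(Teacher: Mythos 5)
Your opening reduction (conjugate by $P^{\otimes m}$, observe that $G$-symmetry becomes a condition on Hamming-weight differences modulo $L$, and reduce to a single cyclic/diagonal symmetry) is exactly the paper's first step, and your case (i) correctly falls out of Theorem \ref{U(1) char}. But your case (ii) goes wrong. The hypothesis ``$L<k$'' in the statement is a typo for ``$k<L$'' (compare Table \ref{summary of characterisation} and Theorem \ref{n finite, n>k}); in that regime a $k$-local operator can change Hamming weight by at most $k<L$, so $G$-symmetry forces it to preserve weight exactly, i.e.\ the $k$-local generators coincide with the $k$-local $U(1)$-symmetric generators, which is precisely why the characterisation lands in $\left(P^{\dagger}\right)^{\otimes m}\h_m^{U(1)}P^{\otimes m}$ rather than in $\h_m^G$. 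Your plan --- to show that ``the off-diagonal sectors (weight difference a nonzero multiple of $L$) are generated by $k$-local operators'' via weight-$L$ transpositions on $\le k$ qubits --- is exactly what is impossible here, and the statement you set out to prove ($A\in\h_m^G$ subject only to the trace conditions) is inconsistent with the dimension formula you then quote: if it were true the defect would be $m-k$, not $\dim\h_m^G-\binom{2m}{m}+m-k$. The correct argument for (ii) is one line once the typo is fixed.

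Case (iii) is where all the real work lives, and your sketch leaves both halves open. For necessity when $L$ is even, the paper does not use a ``parity of weight-differences'' heuristic: it shows $\tr{AZ^{\otimes m}}=0$ for every $k$-local generator by pigeonhole (a $k$-local operator with $k<m$ acts trivially on some qubit, and $\tr{Z}=0$ kills the trace), and then shows this linear functional is closed under brackets because $Z^{\otimes m}=U(g)^{L/2}$ commutes with everything in $\h_m^G$ when $L$ is even --- neither ingredient appears in your proposal, and without the centrality observation the functional need not survive the Lie closure. For sufficiency, the paper runs explicit inductions on Hamming weight using $\mathfrak{su}(2)$-like triples $\left(A_{\b},\alpha_{\b},\beta_{\b}\right)$ built from weight-$n$ raising/lowering operators to produce every diagonal $iZ^{\b}$ (all of them when $n$ is odd, all with $w(\b)<m$ when $n$ is even, the even case requiring a further combinatorial averaging over subsets), then proves separately that $\h_m^G$ is generated by $\h_k^G$ together with the diagonal projectors $i\bra{\B{b}}\ket{\B{b}}$, and finally patches via $[\h_m^G,\h_m^G]=\{A\in\h_m^G:\tr{A\Pi_l}=0\}\subseteq\h_k^G$. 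You explicitly flag this as ``the main obstacle'' and supply no construction, so the heart of the theorem remains unproved in your proposal.
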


\begin{table}
\caption{Characterisation of operators in $\h_k^G$ for an Abelian group $G$}
\label{summary of characterisation}
\centering
\begin{tabular}{c|c|c}

                \Xhline{3.5\arrayrulewidth}\rule{0pt}{3ex}       &                                                                                                                                                     $A\in \h_k^{G}$      & $\dim\h_m^{G}-\dim\h_k^{G}$                                                                                              \\[2.5ex] \Xhline{3.5\arrayrulewidth}\rule{0pt}{5ex}    

$L = \infty$           & \begin{tabular}[c]{@{}c@{}}$A\in \h_m^{G}$ and \\ $\tr{AC_l} =0$ for $l = k+1, \dots, m$\end{tabular}                                                         & $m-k$                                                                                                            \\[5ex] \hline\rule{0pt}{5ex}
$k<L< \infty$          & \begin{tabular}[c]{@{}c@{}}$A\in\left(P^{\dagger}\right)^{\otimes m}\h_m^{U(1)}P^{\otimes m}$ and \\ $\tr{AC_l} = 0$ for $l = k+1, \dots, m$\end{tabular} & $\sum_{r=0}^{L-1}\left(\sum_{\substack{0\leq j\leq m\\j\equiv r\mod{L}}}\binom{m}{j}\right)^2-\binom{2m}{m}-k+m$ \\[5ex] \hline\rule{0pt}{5ex}
$L\leq k$ and $L$ odd  & $A\in \h_m^{G}$                                                                                                                                               & 0                                                                                                                \\[5ex] \hline\rule{0pt}{5ex}
$L\leq k$ and $L$ even & \begin{tabular}[c]{@{}c@{}}$A\in\h_m^{G}$ and \\ $\tr{AC_m} =\tr{AZ^{\otimes m}}= 0$\end{tabular}                                                             & 1                                                                                                                \\[5ex] \hline
\end{tabular}
\end{table}

It is interesting to note that whenever $L$ is odd and satisfies $L\leq k$, we have universality, that is, $\h_k^G=\h_m^G$. Additionally, our result tells us exactly which operators can be implemented, strengthening some of the earlier results due to Marvian \cite[Theorem 13, Corollary 1]{Marvian} for the case when $G$ is Abelian. It will be interesting to see if similar explicit characterisations exist in the non-Abelian setting.

\section{Sketch of the Proof}
In this section, we sketch the proof of Theorem \ref{main theorem}. 

We first show that it suffices to prove the result for the case $G=\Z/n\Z$, and as a consequence of Theorem \ref{U(1) char}, we may assume $n\leq k$. 

When $n$ is even, we show by a pigeonhole principle argument that an operator in $\h_k^G$ satisfies a non-trivial diagonal constraint in addition to being in $\h_m^G$. This shows that $\dim\h_m^G-\dim\h_k^G\geq 1$. Then, using an inductive argument, we show that the subspace of $\h_k^G$ consisting of diagonal matrices has co-dimension at most $1$ in the space of all diagonal matrices. We also show that $\h_m^G$ is equal to the real Lie algebra generated by $\h_k^G$ 
and the space of all diagonal matrices. This allows us to deal with the off-diagonal constraints satisfied by the elements of $\h_k^G$. Finally, we do a patching argument similar to \cite{Marvian}. We show that $[\h_m^G, \h_m^G]\subseteq \h_k^G$. We will also explicitly describe $[\h_m^G, \h_m^G]$ from which the result will follow upon adding (as vector spaces) $[\h_m^G, \h_m^G]$ with the subspace of $\h_k^G$ consisting of diagonal matrices.

The case when $n$ is odd is quite similar, except for the fact that we don't have any non-trivial linear constraint and the subspace of $\h_k^G$ consisting of diagonal matrices has co-dimension $0$ in the space of all diagonal matrices of the same size. In this case, we won't need the patching argument.

\section{Initial Reductions}
Set $\Z/n\Z:=\Z$ whenever $n=\infty$. With this convention and the notation from before, it is easy to see that 
$$\a_k^G = \left(P^{\dagger}\right)^{\otimes m}\bigcap_{g\in G}\a_{k}^{\Z/n(g)\Z}P^{\otimes m}.$$
Here the action of the cyclic group $\Z/n(g)\Z$ is determined by the generator going to $\begin{pmatrix}1&0\\0&\omega(g)\end{pmatrix}^{\otimes m}$, where $\omega(g) := \frac{\lambda_{2,2}(g)}{\lambda_{1,1}(g)}\in \mathbb{S}^1$. Furthermore, observe that $\bigcap_{g\in G}\a_k^{\Z/n(g)\Z} = \a_k^{\Z/L\Z}$, where the action $\Z/L\Z$ is determined by the generator going to $\begin{pmatrix}1&0\\0&\omega\end{pmatrix}^{\otimes m}$ for some $\omega\in \mathbb{S}^1$ with $\operatorname{ord}(\omega)=L$. Thus, from now on, we will only consider the case where $G=\langle g\rangle\cong \Z/n\Z$ with unitary representation $U$ of $G$ such that $U(g) = \begin{pmatrix}1&0\\0&\omega\end{pmatrix}^{\otimes m}$ for some $\omega\in \mathbb{S}^1$ with $\operatorname{ord}(\omega)=n$.

We deal with the case when $k<n$. When $n=\infty$, we have the following result. 
\begin{theorem}
\label{n infinite}
An operator $A\in \h_k^G$ if and only if $A\in\h_m^G$ and $\tr{AC_l} = 0$ for $l = k+1, \dots, m$. Thus, $\dim\h_m^G-\dim \h_k^G = m-k$.
\end{theorem}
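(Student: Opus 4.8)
The plan is to reduce Theorem \ref{n infinite} to the already-established characterisation of $\h_k^{U(1)}$ in Theorem \ref{U(1) char}. The point is that when $\omega$ has infinite order, the cyclic group $G\cong\Z$ with generator acting as $U(g)=\begin{pmatrix}1&0\\0&\omega\end{pmatrix}^{\otimes m}$ imposes \emph{exactly} the same constraint on operators as the continuous group $U(1)$ does, so the two Lie algebras coincide and the result transfers verbatim.

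First I would describe the eigenspace decomposition of $U(g)$. On the computational basis state $\bra{\B{b}}$, the operator $U(g)$ acts by the scalar $\omega^{w(\B{b})}$, where $w(\B{b})$ is the Hamming weight. Since $\ord{\omega}=\infty$, the powers $\omega^0,\omega^1,\dots,\omega^m$ are pairwise distinct, so the distinct eigenspaces of $U(g)$ are precisely the Hamming-weight subspaces $\H_l:=\span\{\bra{\B{b}}:w(\B{b})=l\}$ for $l=0,\dots,m$. Because the representation $U$ of $G\cong\Z$ is generated by $U(g)$, an operator $A$ is $G$-symmetric (i.e. commutes with every $U(h)$) if and only if it commutes with $U(g)$, which happens if and only if $A$ preserves each $\H_l$; that is, the commutant of $\{U(h):h\in G\}$ is the block-diagonal algebra with respect to the grading $\H=\bigoplus_{l=0}^m\H_l$.

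Next I would observe that this is exactly the commutant of the $U(1)$-representation $\theta\mapsto\begin{pmatrix}1&0\\0&e^{i\theta}\end{pmatrix}^{\otimes m}$, since the $U(1)$ generator also acts on $\bra{\B{b}}$ by the scalar $e^{i\theta w(\B{b})}$ and these scalars separate the subspaces $\H_l$. Consequently the $k$-local, $G$-symmetric skew-Hermitian matrices and the $k$-local, $U(1)$-symmetric skew-Hermitian matrices are the same set, i.e. $\a_k^G=\a_k^{U(1)}$ (and the same for $m$ in place of $k$). Taking the real Lie algebra generated by each side, $\h_k^G=\alg{\a_k^G}=\alg{\a_k^{U(1)}}=\h_k^{U(1)}$ and similarly $\h_m^G=\h_m^{U(1)}$. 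Now Theorem \ref{U(1) char} applies directly: $A\in\h_k^{U(1)}$ iff $A\in\h_m^{U(1)}$ and $\tr{AC_l}=0$ for $l=k+1,\dots,m$, with $\dim\h_m^{U(1)}-\dim\h_k^{U(1)}=m-k$; rewriting $\h^{U(1)}$ as $\h^G$ throughout gives the statement.

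There is essentially no serious obstacle in this argument; the only thing one must get right is that $\ord{\omega}=\infty$ genuinely forces the eigenvalues $\omega^0,\dots,\omega^m$ to be distinct, which is what collapses the $G$-commutant onto the Hamming-weight-block algebra and makes it coincide with the $U(1)$-commutant. Once that identification is made at the level of sets — so that it is preserved by both the $k$-locality restriction and the operation of generating a real Lie algebra — the theorem is immediate, and all the real content sits in the previously cited Theorem \ref{U(1) char}.
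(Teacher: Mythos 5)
Your argument is correct and is essentially the paper's own proof: the paper likewise observes that when $\ord{\omega}=\infty$ commuting with $\begin{pmatrix}1&0\\0&\omega\end{pmatrix}^{\otimes m}$ is equivalent to commuting with $\left(e^{i\theta Z}\right)^{\otimes m}$ for all $\theta$, and then invokes Theorem \ref{U(1) char}. You have merely spelled out the eigenspace/commutant identification that the paper leaves implicit.
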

\begin{proof}
Since $n=\infty$, therefore $A$ commutes with $\begin{pmatrix}1&0\\0&\omega\end{pmatrix}^{\otimes m}$ if and only if $A$ commutes with $\left(e^{i\theta Z}\right)^{\otimes m}$ for all $\theta\in \R$. Thus, the result follows immediately from Theorem \ref{U(1) char}.
\end{proof}

Lastly, we have the following result if $k<n<\infty$.
\begin{theorem}
\label{n finite, n>k}
Let $k<n<\infty$. Then $A\in \h_k^G$ if and only if $A\in\h_m^{U(1)}$ and $\tr{AC_l} = 0$ for $l = k+1, \dots, m$. Thus, $$\dim\h_m^G-\dim \h_k^G = \sum_{r=0}^{n-1}\left(\sum_{\substack{0\leq j\leq m\\ j\equiv r\mod{n}}}\binom{m}{j}\right)^2-\binom{2m}{m}-k+m.$$
\end{theorem}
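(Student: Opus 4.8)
The plan is to reduce to the infinite case (Theorem \ref{U(1) char}, equivalently Theorem \ref{n infinite}) by showing that when $k < n$, the locality constraint is so severe that the finite-order symmetry $\Z/n\Z$ imposes exactly the same restrictions on $k$-local skew-hermitian matrices as the full $U(1)$ symmetry does. Concretely, I would first argue that a $k$-local operator $A$ supported on a single set $S$ of qubits with $|S| \le k < n$ satisfies $[A, U(g)] = 0$ if and only if $A$ commutes with $\left(\begin{smallmatrix}1&0\\0&\omega\end{smallmatrix}\right)^{\otimes S} \otimes I$, and since $|S| < n = \ord(\omega)$, the relevant charge differences appearing in the nonzero matrix entries of $A$ range only over $\{-(k-1), \dots, k-1\} \subseteq \{-(n-1),\dots,n-1\}$, a set of representatives that are pairwise distinct mod $n$. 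Hence the constraint ``charge difference $\equiv 0 \bmod n$'' collapses to ``charge difference $= 0$'', which is precisely the $U(1)$-symmetry condition. This shows $\a_k^G = \a_k^{U(1)}$, and therefore $\h_k^G = \h_k^{U(1)}$, at which point Theorem \ref{U(1) char} gives the stated characterisation: $A \in \h_k^G$ iff $A \in \h_m^{U(1)}$ and $\tr{AC_l} = 0$ for $l = k+1, \dots, m$.

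For the dimension count I would proceed in two steps. First, $\dim \h_m^{U(1)}$ is computed directly from the spanning set displayed in the excerpt: the real span of the $i(\bra{\b}\ket{\b'} + \bra{\b'}\ket{\b})$ and $\bra{\b}\ket{\b'} - \bra{\b'}\ket{\b}$ over pairs with $w(\b) = w(\b')$ has dimension equal to the sum over each weight class of the square of its size, i.e. $\sum_{j=0}^m \binom{m}{j}^2 = \binom{2m}{m}$ (the last equality by Vandermonde). Second, by Theorem \ref{U(1) char} the codimension of $\h_k^{U(1)}$ inside $\h_m^{U(1)}$ is exactly $m - k$, the traces against $C_{k+1}, \dots, C_m$ being linearly independent functionals on $\h_m^{U(1)}$. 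Meanwhile $\dim \h_m^G$ must be computed for the genuinely finite group $G = \Z/n\Z$: here $A \in \h_m^G$ iff $A$ commutes with $\left(\begin{smallmatrix}1&0\\0&\omega\end{smallmatrix}\right)^{\otimes m}$, which forces all nonzero entries $\bra{\b}\ket{\b'}$ to satisfy $w(\b) \equiv w(\b') \bmod n$; grouping the $2^m$ basis states by residue of Hamming weight mod $n$ gives blocks of sizes $N_r := \sum_{j \equiv r \,(n)} \binom{m}{j}$, and the space of skew-hermitian matrices respecting this block structure has real dimension $\sum_{r=0}^{n-1} N_r^2$. Subtracting, $\dim\h_m^G - \dim\h_k^G = \sum_{r=0}^{n-1} N_r^2 - \binom{2m}{m} + (m - k)$, which is the claimed formula.

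The main obstacle is the first step — verifying carefully that $\a_k^G = \a_k^{U(1)}$ when $k < n$, i.e. that no $k$-local $\Z/n\Z$-symmetric skew-hermitian matrix fails to be $U(1)$-symmetric. The subtlety is that a $k$-local operator need not be supported on a single $k$-qubit block but is a sum of terms, each supported on its own $\le k$ block; one must check that the symmetry condition can be imposed blockwise (it can, since $U(g)$ is a tensor product, so conjugation by $U(g)$ acts independently on each block's support), and then within one block the charge-difference argument above applies because $k - 1 < n - 1$ keeps all occurring charge differences within a single period. Once this is in place, everything else is the Vandermonde identity and the block-dimension bookkeeping, both routine. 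I would also note in passing that this argument is exactly the mechanism by which the finite-$L$, $L < k$ regime of Theorem \ref{main theorem} differs: there $n \le k$ allows charge differences to wrap around mod $n$, reintroducing genuinely new symmetric operators and hence a smaller codimension.
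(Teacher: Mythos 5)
Your proposal is correct and takes essentially the same route as the paper: the paper's proof simply asserts that for $k<n$ a $k$-local operator commutes with $\left(\begin{smallmatrix}1&0\\0&\omega\end{smallmatrix}\right)^{\otimes m}$ if and only if it commutes with $(e^{i\theta Z})^{\otimes m}$ for all $\theta$, then invokes Theorem \ref{U(1) char} and does the identical bookkeeping ($\dim\h_m^{U(1)}=\binom{2m}{m}$ and $\dim\h_m^G=\sum_{r=0}^{n-1}N_r^2$ with $N_r=\sum_{j\equiv r\ (n)}\binom{m}{j}$), so your charge-difference argument is precisely the justification the paper leaves implicit. Two harmless slips worth fixing: the charge differences of an operator supported on at most $k$ qubits lie in $\{-k,\dots,k\}$, not $\{-(k-1),\dots,k-1\}$, and these representatives need not be pairwise distinct mod $n$ — all that is needed is that the only multiple of $n$ among them is $0$, which holds because $k<n$.
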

\begin{proof}
Since $k<n$, therefore a $k$-local operator commutes with $\begin{pmatrix}1&0\\0&\omega\end{pmatrix}^{\otimes m}$ if and only if it commutes with $\left(e^{i\theta Z}\right)^{\otimes m}$ for all $\theta\in \R$. Thus, from Theorem \ref{U(1) char}, it follows that $A\in \h_k^G$ if and only if $A\in\h_m^{U(1)}$ and $\tr{AC_l} = 0$ for $l = k+1, \dots, m$. The set of all hermitian operators commuting with $\left(e^{i\theta Z}\right)^{\otimes m}$ for all $\theta\in \R$, that is, $\h_m^{U(1)}$ has dimension $\binom{m}{0}^2+\binom{m}{1}^2+\dots+\binom{m}{m}^2 = \binom{2m}{m}$. Using Theorem \ref{U(1) char}, we get that $\binom{2m}{m}-\dim\h_k^G = m-k$.
Finally, we note that $$\dim\h_m^G = \sum_{r=0}^{n-1}\left(\sum_{\substack{0\leq j\leq m\\ j\equiv r\mod{n}}}\binom{m}{j}\right)^2.$$
\end{proof}

\section{Cyclic Symmetries with $n\leq k$}
Now, we deal with the remaining case, when the generating operators are $k$-local for $n\leq k$. Interestingly, the characterisation, in this case, depends on the parity of $n$.

\begin{theorem}
\label{n finite k at least n}
Let $n<m$ and $n \leq k$.
\begin{enumerate}
    \item[(i)] For $n$ odd, we have $\h_m^G = \h_k^G$.
    \item[(ii)] For $n$ even, we have $A\in \h_k^G$ if and only if $A\in\h_m^G$ and $\tr{AC_m} =\tr{AZ^{\otimes m}}= 0$. In particular, $\dim\h_m^G-\dim\h_k^G = 1$.
\end{enumerate}
\end{theorem}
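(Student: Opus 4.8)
The plan is to follow the roadmap laid out in the sketch, handling the diagonal part, the commutator $[\h_m^G,\h_m^G]$, and a patching argument separately. Throughout, write $\mathfrak{d}_k^G$ for the subspace of $\h_k^G$ consisting of diagonal (skew-hermitian) matrices, and $\mathfrak{d}_m^G$ for all diagonal matrices in $\h_m^G$ (equivalently, all diagonal skew-hermitian matrices, since every diagonal matrix commutes with $U(g)$). The first step is to pin down $\mathfrak{d}_k^G$. A diagonal skew-hermitian matrix is a real linear combination of the $iZ^{\B{b}}$, $\B{b}\in\{0,1\}^m$; among these, the $k$-local, $G$-symmetric ones span $\span\{iZ^{\B{b}}:w(\B{b})\leq k\}$ (symmetry is automatic for diagonal operators). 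The question is which diagonal matrices lie in the \emph{Lie algebra} generated together with the off-diagonal generators. I would argue that brackets of the basic off-diagonal generators $i(\bra{\B{b}}\ket{\B{b}'}+\bra{\B{b}'}\ket{\B{b}})$ and $\bra{\B{b}}\ket{\B{b}'}-\bra{\B{b}'}\ket{\B{b}}$ (with $w(\B{b})=w(\B{b}')$, and additionally, for $k$-locality, $\B{b},\B{b}'$ differing in at most $k$ places and each of weight $\leq$ the relevant bound) produce the rank-one diagonal differences $i(\bra{\B{b}}\ket{\B{b}}-\bra{\B{b}'}\ket{\B{b}'})$. An inductive/connectivity argument on the ``Hamming graph restricted to a weight class and to step size $\leq k-n$ or so'' shows these differences span, inside each weight class, a space of codimension $1$ (spanned by all differences) or $0$ — and crucially, because $n\leq k$, the generators at weight $0$ through weight $n$ (or $m-n$ through $m$) are available, letting one hop between weight classes. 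For $n$ odd one can hop all the way and close up \emph{all} diagonal differences across \emph{all} weights together with enough ``trace'' directions to recover every diagonal matrix, so $\mathfrak{d}_k^G=\mathfrak{d}_m^G$; for $n$ even, a parity/pigeonhole obstruction (the one producing $\tr{AZ^{\otimes m}}=0$, i.e. $\tr{AC_m}=0$) blocks exactly one direction, giving codimension $1$.

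The second step is to produce the linear constraint in the even case and show it is the \emph{only} one. For $n$ even: I would show by a pigeonhole argument, exactly as flagged in the sketch, that any $k$-local $G$-symmetric generator $A$ satisfies $\tr{AZ^{\otimes m}}=0$. Concretely, $A$ is $k$-local with $k<m$, so it acts trivially on some qubit, say qubit $j$; then $\tr{AZ^{\otimes m}}=\tr{(A)(Z_j)(\text{rest})}$ and one uses $\tr{Z}=0$ on the $j$-th factor — but one must be careful because $A$ need not factor. The correct statement is that for any $k$-local operator with $k<m$, $\tr{AZ^{\otimes m}}=0$ automatically (expand $A$ in the Pauli basis; every Pauli string appearing is supported on $\leq k<m$ qubits, hence is orthogonal to $Z^{\otimes m}$ under the trace inner product). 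So this constraint is forced on \emph{generators}, hence on the Lie algebra they generate (the map $B\mapsto\tr{BZ^{\otimes m}}$ vanishes on generators and on brackets — for brackets one needs $\tr{[X,Y]Z^{\otimes m}}=\tr{Y[Z^{\otimes m},X]}=0$ since $Z^{\otimes m}$ is diagonal and commutes with all of $\h_m^{U(1)}\supseteq\h_m^G$... wait, it does not commute with off-diagonal elements, so one instead argues $[X,Y]$ for $X,Y\in\h_m^G$ is again a sum of $k'$-local pieces only after the patching step). I would therefore sequence it so that the constraint ``$\tr{AC_m}=0$'' is verified directly on $\a_k^G$ and then shown to pass to $\h_k^G=\alg{\a_k^G}$ using that $C_m=Z^{\otimes m}$ is diagonal: for $X,Y\in\a_k^G$, $\tr{[X,Y]C_m}=\tr{X[Y,C_m]} - \tr{Y[X,C_m]}$ type manipulations reduce to lower data, and more robustly, once we know (from the patching step below) that $[\h_m^G,\h_m^G]\subseteq\h_k^G$ and that this commutator ideal already satisfies $\tr{\cdot\, C_m}=0$, the constraint on all of $\h_k^G$ follows since $\h_k^G$ is spanned by $[\h_m^G,\h_m^G]$ plus $\mathfrak{d}_k^G$.

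The third step is the patching argument, which I expect to be the technical heart. Following \cite{Marvian}, I would show $[\h_m^G,\h_m^G]\subseteq\h_k^G$ by writing a general off-diagonal generator of $\h_m^G$ — a transition $i(\bra{\B{b}}\ket{\B{b}'}+\bra{\B{b}'}\ket{\B{b}})$ with $w(\B{b})=w(\B{b}')$ but $\B{b},\B{b}'$ possibly differing in up to $2m$ places — as an iterated commutator of $k$-local $G$-symmetric generators. The point is that a ``long'' transition can be built by concatenating ``short'' ones: choose a path $\B{b}=\B{b}^{(0)},\B{b}^{(1)},\dots,\B{b}^{(t)}=\B{b}'$ in $\{0,1\}^m$ with consecutive $\B{b}^{(s)},\B{b}^{(s+1)}$ of equal weight and differing in at most $2$ coordinates (a transposition of a $0$ and a $1$), each of which is $2$-local hence $k$-local since $k\geq n\geq 2$; brackets of such short transitions telescope to give the long one, \emph{modulo} shorter transitions and diagonal terms that are themselves already in $\h_k^G$. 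Then I would identify $[\h_m^G,\h_m^G]$ explicitly: computing brackets of the spanning transitions of $\h_m^G$, the commutator subalgebra is spanned by all transitions $i(\bra{\B{b}}\ket{\B{b}'}\pm\bra{\B{b}'}\ket{\B{b}})$ together with the diagonal differences $i(\bra{\B{b}}\ket{\B{b}}-\bra{\B{b}'}\ket{\B{b}'})$ with $w(\B{b})=w(\B{b}')$ — i.e. everything except the ``global trace'' direction $iI$, which splits off because $\h_m^G$ is (up to center) semisimple-like. Finally, $\h_k^G=[\h_m^G,\h_m^G]+\mathfrak{d}_k^G$ as vector spaces: ``$\supseteq$'' from the patching and Step 1; ``$\subseteq$'' because any $k$-local $G$-symmetric $A$ decomposes into its diagonal part (in $\mathfrak{d}_k^G$) plus off-diagonal part, and the off-diagonal part, having all its weight-class transitions, lies in $[\h_m^G,\h_m^G]$. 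For $n$ odd, $\mathfrak{d}_k^G=\mathfrak{d}_m^G$ so $\h_k^G=[\h_m^G,\h_m^G]+\mathfrak{d}_m^G=\h_m^G$; for $n$ even, $\mathfrak{d}_k^G$ has codimension $1$ in $\mathfrak{d}_m^G$, cutting out precisely $\tr{AZ^{\otimes m}}=0$, so $\dim\h_m^G-\dim\h_k^G=1$. The main obstacle is making the induction in Step 1 and the path-concatenation bookkeeping in Step 3 precise simultaneously — one must track, at each induction step, both the $k$-locality of every intermediate generator and the exact set of diagonal directions already available, since these feed into each other (diagonal generators help move between weight classes, and short transitions generate new diagonal differences), and the parity of $n$ enters only through whether the resulting ``diagonal lattice'' closes up completely or misses the $Z^{\otimes m}$ direction by a single pigeonhole step.
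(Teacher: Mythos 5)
Your skeleton matches the paper's (diagonal part, a pigeonhole-type linear constraint for $n$ even, and a patching step showing $[\h_m^G,\h_m^G]\subseteq\h_k^G$), but the two steps that carry all the difficulty are asserted rather than proved, and one of them is set up in a way that cannot work as described. First, the diagonal characterization. Brackets of off-diagonal elements of $\h_m^G$ only ever produce diagonal matrices $A$ with $\tr{A\Pi_l}=0$ for every weight-block projector $\Pi_l$, so your ``connectivity on the Hamming graph'' argument can at best recover the block-traceless diagonals; it cannot produce $iZ^{\B{b}}$ for $n<w(\B{b})$, which is exactly what is needed. The paper does this by an explicit induction with hand-built ladder operators $A_{\b},\alpha_{\b},\beta_{\b}$ supported on $n$ qubits (Lemmas \ref{diag characterisation n odd} and \ref{diag characterisation n even}); in the even case one only obtains certain symmetrized sums of $iZ^{\B{d}}$'s and must invert a system of binomial-coefficient identities to isolate a single $iZ^{\b}$ with $w(\b)=l<m$. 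None of this is present in your Step 1, and the parity of $n$ enters precisely through these constructions, not through a generic ``pigeonhole obstruction.''

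Second, two repairable but real errors. (a) In Step 2 your worry about closing the constraint under brackets is unfounded, and your stated reason is false: for $n$ even, $Z^{\otimes m}=U(g)^{n/2}$, so \emph{every} element of $\h_m^G$ (off-diagonal ones included, since they connect $\b,\b'$ with $w(\b)\equiv w(\b')\bmod n$ and $n$ is even) commutes with $Z^{\otimes m}$, whence $\tr{[D,E]Z^{\otimes m}}=\tr{[D,EZ^{\otimes m}]}=0$. Your fallback of deriving the constraint from the patching step makes Step 2 depend on Step 3, which depends on Step 1, flirting with circularity. (b) In Step 3, telescoping ``short transitions'' fails at the first move: an individual transition $F(\b,\b')=\bra{\b}\ket{\b'}-\bra{\b'}\ket{\b}$ between two specific basis states is \emph{not} $k$-local; only weight-preserving sums such as $i(X_rX_s+Y_rY_s)/2$ are. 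To extract a single $F(\b,\b')$ one must bracket against the $m$-local projector $i\bra{\b}\ket{\b}$, which is why the paper proves $\h_m^G=\alg{\{i\bra{\b}\ket{\b}:\b\in\{0,1\}^m\}\cup\h_k^G}$ and then, using that the only projector direction missing from $\h_k^G$ is the \emph{central} element $iZ^{\otimes m}$, concludes $[\h_m^G,\h_m^G]=[\h_k^G,\h_k^G]\subseteq\h_k^G$. Relatedly, your description of $[\h_m^G,\h_m^G]$ as ``everything except $iI$'' is wrong: its codimension in $\h_m^G$ is $n$, cut out by all $n$ functionals $A\mapsto\tr{A\Pi_l}$, and the dimension count at the end of your Step 3 only closes if you use this correct description.
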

In the remainder of this section, we prove a series of lemmas from which Theorem \ref{n finite k at least n} will follow.
\subsection{Diagonal Constraints}
In the following lemma, we show that for $n$ even, $\dim\h_m^G-\dim\h_k^G\geq 1$ by showing $\tr{AZ^{\otimes m}} = 0$ for all $A\in \h_k^G$.
\begin{lemma}
\label{necessity}
Let $n<m$, $n$ even and $n\leq k$. Then $A\in\h_k^G$ implies $$\tr{AC_m} =\tr{AZ^{\otimes m}}= 0.$$
\end{lemma}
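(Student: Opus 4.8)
The plan is to verify the identity $\tr{AZ^{\otimes m}}=0$ directly, after two preliminary observations. First, the all-ones string is the unique $\b\in\{0,1\}^m$ of Hamming weight $m$, so $C_m=Z^{\otimes m}$ and the two displayed equalities are literally the same statement. Second --- and here is where evenness of $n$ is used --- $Z^{\otimes m}$ is central in the algebra of $G$-symmetric operators: $\omega$ has order $n$ with $n$ even, so $\omega^{n/2}=-1$ and hence $Z^{\otimes m}=U(g)^{n/2}$, whence anything commuting with $U(g)$ commutes with $Z^{\otimes m}$. (Equivalently, $G$-symmetry forces a matrix entry $B_{\b\b^{\prime}}$ to vanish unless $w(\b)\equiv w(\b^{\prime})\pmod n$, and for $n$ even this yields $w(\b)\equiv w(\b^{\prime})\pmod 2$, i.e.\ $(-1)^{w(\b)}=(-1)^{w(\b^{\prime})}$, so $[B,Z^{\otimes m}]=0$.) In particular every element of $\h_m^G$, hence every element of $\a_k^G\subseteq\h_k^G\subseteq\h_m^G$, commutes with $Z^{\otimes m}$.

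Next I would use the decomposition $\h_k^G=\span(\a_k^G)+[\h_k^G,\h_k^G]$, which holds for the Lie algebra generated by any set of matrices (the right-hand side is already a Lie subalgebra containing $\a_k^G$, and is contained in $\h_k^G$). On the bracket part the functional $A\mapsto\tr{AZ^{\otimes m}}$ vanishes: since $[\h_k^G,\h_k^G]\subseteq[\h_m^G,\h_m^G]$, a typical element is a sum of terms $[B,C]$ with $B,C\in\h_m^G$, and
\[
\tr{[B,C]Z^{\otimes m}}=\tr{B\,[C,Z^{\otimes m}]}=0
\]
by the centrality recorded above.

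It remains to handle $\span(\a_k^G)$, which is pure locality. A $k$-local, $G$-symmetric, skew-Hermitian generator $A$ is supported on at most $k<m$ qubits, so it acts as the identity on some qubit $j$; factoring $A=I_2\otimes A'$ and $Z^{\otimes m}=Z\otimes Z^{\otimes(m-1)}$ across qubit $j$ gives $\tr{AZ^{\otimes m}}=\tr{Z}\cdot\tr{A'Z^{\otimes(m-1)}}=0$ because $\tr{Z}=0$; by linearity the same holds for any element of $\span(\a_k^G)$. Combining the pieces, $\tr{AC_m}=\tr{AZ^{\otimes m}}=0$ for every $A\in\h_k^G$.

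I expect the only genuinely delicate point to be the vanishing on $[\h_k^G,\h_k^G]$, which rests squarely on $Z^{\otimes m}$ being central among $G$-symmetric operators and hence on $n$ being even; everything else is routine. It is worth noting that the argument collapses for odd $n$, where $G$-symmetric operators need not commute with $Z^{\otimes m}$ and their commutators can violate the constraint --- consistent with the fact, recorded in Theorem \ref{n finite k at least n}(i), that no diagonal constraint is imposed in the odd case.
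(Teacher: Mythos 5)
Your proposal is correct and follows essentially the same route as the paper: the same pigeonhole-plus-$\tr{Z}=0$ computation on the $k$-local generators, and the same use of the centrality of $Z^{\otimes m}=U(g)^{n/2}$ (via $\tr{[B,C]Z^{\otimes m}}=\tr{B[C,Z^{\otimes m}]}=0$) to handle commutators. Your packaging via the decomposition $\h_k^G=\span(\a_k^G)+[\h_k^G,\h_k^G]$ is just a cleaner restatement of the paper's ``the property is closed under brackets'' step, and both arguments are sound.
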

\begin{proof}
Since $A\in\h_k^G$, therefore the operator $A$ commutes with $\left(\begin{pmatrix}1&0\\0&\omega\end{pmatrix}^{\frac{n}{2}}\right)^{\otimes m} = Z^{\otimes m}$. Let $A\in \a_k^G$, then by the pigeon hole principle, there exists $j\in \{1, \dots, m\}$ such that $A$ acts trivially on qubit $j$. Thus, we can write $A$ as linear combination of terms of the form $A^\prime \otimes I\otimes A^{\prime\prime}$, where $A^\prime$ acts on first $j-1$ qubits and $A^{\prime\prime}$ acts on last $n-j$ qubits. As $$\tr{\left(A^\prime \otimes I\otimes A^{\prime\prime}\right)Z^{\otimes m}}=\tr{A^\prime Z^{\otimes (j-1)}}\tr{Z}\tr{A^{\prime\prime}Z^{\otimes (n-j)}} = 0,$$ therefore $\tr{AZ^{\otimes m}} = 0$.  To finish the proof, we show that this property is also closed under $[\cdot, \cdot]$. Let $D, E\in \h_k^G$ have the desired property. As $D, E$ commute with $Z^{\otimes m}$, therefore $\tr{[D, E]Z^{\otimes m}} = \tr{[D, EZ^{\otimes m}]} = 0$.
\end{proof}

Our next result shows that for $n$ odd, $\h_k^G$ contains all the diagonal operators in $\h_m^G$.

Observe that $\left\{iZ^{\b}\right\}_{\b\in\{0, 1\}^m}$ forms a basis for the diagonal operators in $\h_m^G$. Additionally, for $\b, \b^\prime\in\{0, 1\}^m$, $$\frac{1}{2^m}\tr{Z^{\b}Z^{\b^\prime}} = \begin{cases}0&\text{if }\b\neq \b^\prime\\
1&\text{if }\b = \b^{\prime}\end{cases}.$$

For $\b\in\{0, 1\}^m$, let $\supp{\b}:=\{j:b_j=1\}$. For $b\in\{0, 1\}$, let $\neg b$ be its negation.
\begin{lemma}
\label{diag characterisation n odd}
Let $n<m$, $n$ odd and $n\leq k$. Then $iZ^{\b}\in \h_k^G$ for all $\b\in \{0, 1\}^m$.
\end{lemma}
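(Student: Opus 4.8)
The goal is to show that every diagonal skew-hermitian $iZ^{\b}$ lies in $\h_k^G$ when $n$ is odd, $n \leq k$, and $n < m$. My plan is to exhibit each $iZ^{\b}$ as an element of the Lie algebra generated by $k$-local generators that already commute with $U(g)$. The natural building blocks are operators supported on at most $n \leq k$ qubits: for a subset $S$ with $|S| = n$, the operator $Z^{\mathbf{e}_S}$ (the product of $Z$'s on $S$) is $n$-local, diagonal, and hence automatically $G$-symmetric, so $iZ^{\mathbf{e}_S} \in \a_n^G \subseteq \a_k^G$. More generally $iZ^{\b'} \in \a_k^G$ for every $\b'$ with $w(\b') \leq k$. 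Since diagonal operators all commute with each other, taking Lie brackets gives nothing new among these; the trick must be to involve non-diagonal generators whose brackets produce the high-weight diagonal operators $iZ^{\b}$ with $w(\b) > k$.

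**The key mechanism.** The standard move (analogous to Marvian's constructions) is: for $\b, \b'$ with $w(\b) = w(\b')$, the off-diagonal operators $i(\bra{\b}\ket{\b'} + \bra{\b'}\ket{\b})$ and $\bra{\b}\ket{\b'} - \bra{\b'}\ket{\b}$ lie in $\h_m^G$, and if in addition $\b, \b'$ differ in at most $k$ coordinates and agree elsewhere, such a transposition-type generator can be built from a $k$-local one (it acts non-trivially only on the symmetric-difference qubits). Bracketing $i(\bra{\b}\ket{\b'}+\bra{\b'}\ket{\b})$ with $i(\bra{\b'}\ket{\b}-\bra{\b}\ket{\b'})$-type operators yields $i(\bra{\b}\ket{\b} - \bra{\b'}\ket{\b'})$, i.e. a diagonal operator equal to a difference of two rank-one projectors. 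Expanding $\bra{\b}\ket{\b}$ in the $Z^{\mathbf{c}}$ basis shows that $i(\bra{\b}\ket{\b} - \bra{\b'}\ket{\b'})$ is a linear combination of $iZ^{\mathbf{c}}$ over various $\mathbf{c}$. So I would: (1) catalog which diagonal operators $\bra{\b}\ket{\b} - \bra{\b'}\ket{\b'}$ are obtainable in $\h_k^G$ this way (using $k$-locality of the off-diagonal pieces, which needs $w(\b) = w(\b')$ and small symmetric difference, both satisfiable since $n \leq k$ and one can chain through intermediate bitstrings); (2) together with the low-weight $iZ^{\b'}$ ($w(\b') \leq k$) already in $\a_k^G$, argue these span all $iZ^{\b}$.

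**The span argument and the role of parity.** To finish I would pass to the $Z^{\mathbf{c}}$-basis and set up a linear-algebra / inductive argument on weight: assuming all $iZ^{\mathbf{c}}$ with $w(\mathbf{c}) < w$ are in $\h_k^G$, I want to produce each $iZ^{\b}$ with $w(\b) = w > k$. The diagonal differences $i(\bra{\b}\ket{\b} - \bra{\b'}\ket{\b'})$ from step (1), reduced modulo the already-known lower-weight operators, should give enough independent vectors in weight $w$; here is exactly where $n$ odd enters. When expanding $\bra{\b}\ket{\b} - \bra{\b'}\ket{\b'}$ and restricting the symmetric difference to have size $n$, the top-weight contribution is governed by a sum like $\sum_{\mathbf{c}} (-1)^{(\text{something})} Z^{\mathbf{c}}$ over $\mathbf{c} \subseteq \operatorname{supp}(\b)\triangle\operatorname{supp}(\b')$ of the appropriate size, and the analogue of Lemma \ref{necessity}'s obstruction — $\tr{A Z^{\otimes m}} = 0$ forced by commuting with $Z^{\otimes m} = U(g)^{n/2}$ — simply does not arise because $n$ is odd, so $U(g)^{n/2}$ is not defined and no such parity constraint is imposed. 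Concretely, for $n$ even there is a global sign/telescoping obstruction that kills $iZ^{\otimes m}$; for $n$ odd the corresponding alternating sums are nonzero, so the top-weight pieces are non-degenerate and the induction closes.

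**Main obstacle.** The hard part will be step (1): carefully checking that the off-diagonal generators $i(\bra{\b}\ket{\b'} \pm \bra{\b'}\ket{\b})$ needed to reach weight-$w$ diagonal operators can genuinely be realized as (brackets of) $k$-local $G$-symmetric operators — this requires choosing the intermediate bitstrings so that every elementary transposition-type move touches at most $k$ qubits and preserves Hamming weight (hence $G$-symmetry), and then verifying that the resulting diagonal combinations, after subtracting known lower-weight terms, actually span the weight-$w$ slice. The parity bookkeeping in the final span count — showing the obtained vectors are linearly independent in each weight slice precisely when $n$ is odd — is the delicate combinatorial core; the rest is routine expansion in the Pauli-$Z$ basis.
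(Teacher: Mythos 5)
There is a genuine gap, and it sits at the heart of the lemma. Your toolbox consists of (a) diagonal operators $iZ^{\b'}$ with $w(\b')\leq k$ and (b) ``transposition-type'' off-diagonal generators connecting $\B{b}$ and $\B{b}'$ with $w(\B{b})=w(\B{b}')$. Every one of these commutes with $\left(e^{i\theta Z}\right)^{\otimes m}$ for all $\theta\in\R$, so the Lie algebra they generate is contained in $\h_k^{U(1)}$. By Theorem \ref{U(1) char}, every $A\in\h_k^{U(1)}$ satisfies $\tr{AC_l}=0$ for $l>k$, whereas $\tr{iZ^{\b}C_{w(\b)}}=i2^m\neq 0$; hence no $iZ^{\b}$ with $w(\b)>k$ can ever be produced from your generators, no matter how the brackets are arranged. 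The lemma is true only because $G\cong\Z/n\Z$ admits additional $n$-local symmetric generators that \emph{change} the Hamming weight by exactly $n$ (allowed since $\omega^{n}=1$, i.e. $G$-symmetry only forces weight preservation mod $n$, not on the nose): the hopping operators $\alpha_{\b},\beta_{\b}$ built from $\begin{pmatrix}0&1\\0&0\end{pmatrix}^{\b}\pm\begin{pmatrix}0&0\\1&0\end{pmatrix}^{\b}$ on an $n$-element support. The paper's proof is an induction on $w(\b)$ whose engine is the identity
$$\left[\begin{pmatrix}0&1\\0&0\end{pmatrix}^{\b_1},\begin{pmatrix}0&0\\1&0\end{pmatrix}^{\b_1}\right]=\begin{pmatrix}1&0\\0&0\end{pmatrix}^{\b_1}-\begin{pmatrix}0&0\\0&1\end{pmatrix}^{\b_1}=\frac{1}{2^{n-1}}\sum_{\substack{\B{d}\prec\b_1\\ w(\B{d})\text{ odd}}}Z^{\B{d}},$$
whose right-hand side contains the top-weight term $Z^{\b_1}$ with nonzero coefficient precisely because $n=w(\b_1)$ is odd; multiplying through by a lower-weight $Z^{\b_2}$ (available by induction) then boosts the weight by $n-1$ at each step. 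This is the exact point where parity enters --- not, as you suggest, merely the absence of the $\tr{AZ^{\otimes m}}=0$ obstruction.

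A secondary but also substantive error: a rank-one hopper $\bra{\B{b}}\ket{\B{b}'}\pm\bra{\B{b}'}\ket{\B{b}}$ is \emph{not} $k$-local when $\B{b},\B{b}'$ differ in few positions --- it fixes the value of every one of the $m$ qubits and therefore acts non-trivially on all of them. The genuinely $k$-local objects are the sums of such hoppers over all configurations of the untouched qubits (e.g. $\tfrac{1}{2}(X_rX_s+Y_rY_s)$, or the $\alpha_{\b},\beta_{\b}$ above). Individual rank-one hoppers generally do not lie in $\h_k^G$ at all (this is exactly why Lemma \ref{h_m characterisation} must adjoin the projectors $i\bra{\B{b}}\ket{\B{b}}$ to $\h_k^G$ to recover $\h_m^G$). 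Repairing step (1) of your plan forces you to replace the rank-one hoppers by these summed, weight-shifting local operators --- at which point you are led back to the paper's construction.
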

\begin{proof}
For $\b\in\{0, 1\}^m$ with $w(\b) = n$, let $\tilde{\b} := b_1\dots b_{j-1}(\neg b_j)b_{j+1}\dots b_m\in \{0,1\}^m$, where $j$ is the largest index for which $b_j = 1$. Let
$$A_{\b} := i\begin{pmatrix}1&0\\0&0\end{pmatrix}^{\tilde{\b}},\;
\alpha_{\b} := \frac{i}{2}\left(\begin{pmatrix}0&1\\0&0\end{pmatrix}^{\b}+\begin{pmatrix}0&0\\1&0\end{pmatrix}^{\b}\right)\text{ and }
\beta_{\b} := \frac{1}{2}\left(\begin{pmatrix}0&1\\0&0\end{pmatrix}^{\b}-\begin{pmatrix}0&0\\1&0\end{pmatrix}^{\b}\right).$$

Observe that
\begin{itemize}
    \item $A_{\b} \in \span\{iZ^{\b}:\b\in\{0, 1\}^m, w(\b) <n\}$,
    \item $\alpha_{\b}, \beta_{\b}\in\h_n^G$,
\item $\left[A_{\b}, \alpha_{\b}\right] = -\beta_{\b}$, $\left[A_{\b}, \beta_{\b}\right] = \alpha_{\b}$, and
\item $\left[\alpha_{\b}, \beta_{\b}\right] = -\frac{i}{2^m}Z^{\b}+\gamma_{\b}$ for some $\gamma_{\b}\in \span\{iZ^{\b}:\b\in\{0, 1\}^m, w(\b) <n\}$.
\end{itemize}

We show that $iZ^{\b}\in \h_k^G$ for all $\b\in \{0, 1\}^m$ by induction on $w(\b)$. The result holds for $w(\b) = 0, \dots, n$ as $k\geq n$, establishing the base cases. Suppose that the result holds for all $\b\in\{0,1\}^m$ with $w(\b)<l$, where $l>n$. Let $\b\in\{0, 1\}^m$ be such that $\supp{\b} = \{j_1, \dots, j_l\}$. Let $\b_1, \b_2\in \{0, 1\}^m$ be such that $\supp{\b_1} = \{j_1, \dots, j_n\}$ and $\supp{\b_2} = \{j_{n+1}, \dots, j_l\}$. By the induction hypothesis, $A_{\b_1}Z^{\b_2}\in \h_k^G$. Thus, $\left[A_{\b_1}Z^{\b_2}, \beta_{\b_1}\right] = \alpha_{\b_1}Z^{\b_2}\in \h_k^G$. This implies that $$\left[\alpha_{\b_1}Z^{\b_2}, \beta_{\b_1}\right] = -\frac{i}{2^m}Z^{\b}+\gamma_{\b_1}Z^{\b_2}\in \h_k^G.$$ By the induction hypothesis, $\gamma_{\b_1}Z^{\b_2}\in \h_k^G$. Therefore, $iZ^{\b}\in \h_k^G$. This completes the induction and hence the proof.
\end{proof}

The next result characterises the diagonal operators in $\h_k^G$ for the case when $n$ is even.

\begin{lemma}
\label{diag characterisation n even}
Let $n<m$, $n$ even and $n \leq k$. Then $iZ^{\b}\in \h_k^G$ for all $\b\in \{0, 1\}^m$ with $w(\b)<m$.
\end{lemma}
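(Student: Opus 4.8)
The plan is to imitate the inductive scheme of Lemma \ref{diag characterisation n odd}, but with one crucial twist that handles the parity obstruction. For $n$ odd, the key identity $[\alpha_{\b}, \beta_{\b}] = -\frac{i}{2^m}Z^{\b} + \gamma_{\b}$ was used to extract $iZ^{\b}$ once lower-weight diagonals were available. When $n$ is even, the analogous commutator for a weight-$n$ string $\b$ will instead produce a lower-weight diagonal combination only (the top term $Z^{\b}$ drops out because of how the Pauli operators $\begin{pmatrix}0&1\\0&0\end{pmatrix}$ and $\begin{pmatrix}0&0\\1&0\end{pmatrix}$ multiply around an even-length support), so the naive induction stalls. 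The fix is to run the induction on $w(\b)$ for $w(\b)$ strictly between $n$ and $m$, where one has room to split $\supp{\b}$ into a block $\b_1$ of size $n$ and a \emph{nonempty} block $\b_2$, and to exploit that $Z^{\b_2}$ supplies the extra ``handle'' needed to make the bracket nondegenerate. Concretely, I expect to use operators of the form $A_{\b_1} Z^{\b_2}$, $\alpha_{\b_1} Z^{\b_2}$, $\beta_{\b_1} Z^{\b_2}$ exactly as in the odd case, and verify that $[\alpha_{\b_1} Z^{\b_2}, \beta_{\b_1}] = -\frac{i}{2^m} Z^{\b} + (\text{lower-weight diagonal}) Z^{\b_2}$, with the top term now surviving precisely because the support of $\b_1$ has been padded out by $\b_2$.

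The steps, in order, would be: (1) fix the base cases $w(\b) \le n$ using $k \ge n$, together with the observation that any diagonal $iZ^{\b}$ with $w(\b) \le n$ is $n$-local hence lies in $\a_k^G \subseteq \h_k^G$; (2) for the inductive step with $n < w(\b) = l < m$, choose a size-$n$ sub-block $\b_1 \subseteq \supp{\b}$ and let $\b_2$ be the complement inside $\supp{\b}$, so $1 \le w(\b_2) = l - n < m - n$; (3) by the induction hypothesis obtain $A_{\b_1} Z^{\b_2} \in \h_k^G$ (this is a combination of diagonals of weight $< l$, each handled inductively, after checking $A_{\b_1} Z^{\b_2}$ is still $G$-symmetric and expressible via lower-weight diagonals), then bracket successively against $\beta_{\b_1}$ and $\alpha_{\b_1} Z^{\b_2}$ against $\beta_{\b_1}$ to produce $-\frac{i}{2^m} Z^{\b} + \gamma_{\b_1} Z^{\b_2}$; (4) peel off $\gamma_{\b_1} Z^{\b_2} \in \h_k^G$ by induction to conclude $iZ^{\b} \in \h_k^G$. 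Note the restriction $w(\b) < m$ is not merely technical: Lemma \ref{necessity} shows $iZ^{\otimes m} = iC_m \notin \h_k^G$ when $n$ is even, so this is the best one can hope for.

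The main obstacle I anticipate is pinning down the commutator bookkeeping in step (3) for even $n$: one must confirm that the top term $Z^{\b}$ genuinely appears with a nonzero coefficient in $[\alpha_{\b_1} Z^{\b_2}, \beta_{\b_1}]$ and is not killed by a sign cancellation of the kind that defeats the weight-$n$ case. This amounts to a careful computation with $2\times 2$ matrix units along the support of $\b_1$ and should go through provided $\b_2$ is nonempty — which is exactly why the argument needs $w(\b) > n$ and, dually, cannot reach $w(\b) = m$. A secondary point requiring care is ensuring the intermediate operators $A_{\b_1} Z^{\b_2}$, $\alpha_{\b_1} Z^{\b_2}$, $\beta_{\b_1} Z^{\b_2}$ all commute with $U(g)$, i.e. lie in the symmetry algebra; since $\b_1, \b_2$ have disjoint supports and each factor is individually $G$-symmetric (by the weight conditions and the structure of $\alpha, \beta$ carried over from the $n$-local setting), this should be routine.
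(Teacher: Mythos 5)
There is a genuine gap at the heart of your inductive step, and it sits exactly at the point you flag as the ``main obstacle'': the top term does \emph{not} survive. Since $\supp{\b_1}$ and $\supp{\b_2}$ are disjoint, $Z^{\b_2}$ commutes with $\beta_{\b_1}$, so the bracket tensor-factorizes:
$$\left[\alpha_{\b_1}Z^{\b_2},\,\beta_{\b_1}\right]=\left[\alpha_{\b_1},\beta_{\b_1}\right]Z^{\b_2}=-A_{\b_1}Z^{\b_2},$$
where $A_{\b_1}=\frac{i}{2}\left(\begin{pmatrix}1&0\\0&0\end{pmatrix}^{\b_1}-\begin{pmatrix}0&0\\0&1\end{pmatrix}^{\b_1}\right)$ is a real multiple of $i\sum_{\B{d}\prec\b_1,\,w(\B{d})\text{ odd}}Z^{\B{d}}$. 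Because $w(\b_1)=n$ is even, every $\B{d}$ occurring in $A_{\b_1}$ has weight at most $n-1$, so every diagonal monomial of $[\alpha_{\b_1}Z^{\b_2},\beta_{\b_1}]$ has weight at most $l-1$ and the coefficient of $Z^{\b}=Z^{\b_1}Z^{\b_2}$ is exactly zero. Padding by $Z^{\b_2}$ cannot help: the spectator factor multiplies every term uniformly and cannot create the missing even-weight monomial $Z^{\b_1}$ over $\supp{\b_1}$. Your claimed identity $[\alpha_{\b_1}Z^{\b_2},\beta_{\b_1}]=-\tfrac{i}{2^m}Z^{\b}+(\text{lower weight})Z^{\b_2}$ is therefore false, and the induction stalls at every weight $l>n$, not only at $l=n$.

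The paper's proof is built around precisely this degeneracy and uses a mechanism absent from your proposal. It works with a set $J$ of $l+1$ (not $l$) indices, produces $A_{\b_1}Z^{\b_2}Z_{j_{l+1}}\in\h_k^G$ by a three-fold bracket through $\alpha_{\b_1}Z^{\b_2}$ and $\beta_{\b_1}Z^{\b_2}Z_{j_{l+1}}$, and then discards (via the induction hypothesis) the sub-leading part of $A_{\b_1}$ to obtain $\bigl(\sum_{\B{d}\prec\b_1,\,w(\B{d})=n-1}iZ^{\B{d}}\bigr)Z^{\b_2}Z_{j_{l+1}}\in\h_k^G$ --- a \emph{sum} of weight-$l$ diagonals rather than a single one. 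Individual generators $iZ^{\b}$ are then extracted by summing these elements over all $n$-subsets $A\subseteq J$ (separately over those containing a distinguished index $j_p$ and those avoiding it), counting multiplicities, and solving the resulting linear relations with coefficients $\binom{l}{n-1}$ and $\binom{l-1}{n-2}$. Your base cases, the role of $k\geq n$, and the remark that the restriction $w(\b)<m$ is forced by Lemma \ref{necessity} are all correct, but without this combinatorial averaging step the proposal does not prove the lemma.
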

\begin{proof}
For $\b\in\{0, 1\}^m$ with $w(\b) = n$, let
$$A_{\b} := \frac{i}{2}\left(\begin{pmatrix}1&0\\0&0\end{pmatrix}^{\b}-\begin{pmatrix}0&0\\0&1\end{pmatrix}^{\b}\right),\;
\alpha_{\b} := \frac{i}{2}\left(\begin{pmatrix}0&1\\0&0\end{pmatrix}^{\b}+\begin{pmatrix}0&0\\1&0\end{pmatrix}^{\b}\right)\text{ and }$$
$$\beta_{\b} := \frac{1}{2}\left(\begin{pmatrix}0&1\\0&0\end{pmatrix}^{\b}-\begin{pmatrix}0&0\\1&0\end{pmatrix}^{\b}\right).$$

For $\B{c}, \B{d}\in\{0,1\}^m$, we write $\B{c}\prec \B{d}$ if $\supp{\B{c}}\subseteq \supp{\B{d}}$.

Observe that
\begin{itemize}
    \item $A_{\b}=\frac{i}{2^m}\sum_{\substack{ \B{d}\prec \b\\ w(\B{d})\text{ odd}}}Z^{\B{d}}$,
    \item $\alpha_{\b}, \beta_{\b}\in\h_n^G$, and
\item $\left[A_{\b}, \alpha_{\b}\right] = -\beta_{\b}$, $\left[A_{\b}, \beta_{\b}\right] = \alpha_{\b}$, $\left[\alpha_{\b}, \beta_{\b}\right] = -A_{\b}$.
\end{itemize}

Again, we show that $iZ^{\b}\in \h_k^G$ for all $b\in \{0, 1\}^m$ with $w(\b)<m$ by induction on $w(\b)$. The result holds for $w(\b) = 0, \dots, n$ as $k\geq n$, establishing the base cases. Suppose that the result holds for all $\b\in\{0,1\}^m$ with $w(\b)<l$, where $m>l>n$.  Let $J:=\{j_1, \dots, j_{l+1}\}\subseteq \{1, \dots, m\}$. Let $\b_1, \b_2\in \{0, 1\}^m$ such that $\supp{\b_1} = \{j_1, \dots, j_n\}$ and $\supp{\b_2} = \{j_{n+1}, \dots, j_l\}$. By the induction hypothesis, $A_{\b_1}Z^{\b_2}\in \h_k^G$. Thus, $\left[A_{\b_1}Z^{\b_2}, \beta_{\b_1}\right] = \alpha_{\b_1}Z^{\b_2}\in \h_k^G$. This implies, $\left[\alpha_{\b_1}Z^{\b_2}, A_{\b_1}Z_{j_{l+1}}\right] = \beta_{\b_1}Z^{\b_2}Z_{j_{l+1}}\in\h_k^G$.  Therefore, $\left[\beta_{\b_1}Z^{\b_2}Z_{j_{l+1}}, \alpha_{\b_1}\right] = A_{\b_1}Z^{\b_2}Z_{j_{l+1}}\in\h_k^G$. By the induction hypothesis, 

$$\frac{i}{2^m}\left(\sum_{\substack{\B{d}\prec \b_1\\ w(\B{d})=n-1}}Z^{\B{d}}\right)Z^{\b_2}Z_{j_{l+1}}\in\h_k^G.$$

For a set $S\subseteq\{1, \dots, m\}$, let $\B{1}_S\in \{0, 1\}^m$ be such that $\supp{\B{1}_S} = S$. By permuting $j_1\dots, j_{l+1}$, we conclude that for any $A\subseteq J$ with $|A| = n$,

$$\frac{i}{2^m}\left(\sum_{\substack{\B{d}\prec \B{1}_A\\ w(\B{d})=n-1}}Z^{\B{d}}\right)Z^{\B{c}_A}\in\h_k^G,$$

where $\B{c}_A\in\{0,1\}^m$ with $\supp{\B{c}_A} = J\setminus A$. 

Let $\b\in\{0, 1\}^m$ with $\supp{\b}= J\setminus\{j_p\}$.  Observe that

$$\frac{i}{2^m}\left(\sum_{\substack{A\subseteq J\\ |A| = n\\ j_p\in A}}\left(\sum_{\substack{ \B{d}\prec \B{1}_A\\ w(\B{d})=n-1}}Z^{\B{d}}\right)Z^{\B{c}_A}\right)\in\h_k^G.$$

Rearranging, we get
$$\binom{l}{n-1}iZ^{\b}+\binom{l-1}{n-2}\left(\sum_{\substack{\B{d}\\w(\B{d})=l\\ j_p\in\supp{\B{d}} }}iZ^{\B{d}}\right)\in\h_k^G.$$

Additionally,
$$\frac{i}{2^m}\left(\sum_{\substack{A\subseteq J\setminus\{j_p\}\\|A| = n }}\left(\sum_{\substack{ \B{d}\prec \B{1}_A\\ w(\B{d})=n-1}}Z^{\B{d}}\right)Z^{\B{c}_A}\right)\in\h_k^G.$$

Rearranging, we get
$$\left(\binom{l}{n-1}-\binom{l-1}{n-2}\right)\left(\sum_{\substack{\B{d}\\w(\B{d})=l\\ j_p\in\supp{\B{d}} }}iZ^{\B{d}}\right)\in\h_k^G.$$

Thus, $iZ^{\b}\in\h_k^G$ for $\b\in\{0, 1\}^m$. Since the choice of $\{j_1, \dots, j_{l+1}\}$ and $\{j_p\}$ was arbitrary to begin with, therefore $iZ^{\b}\in\h_k^G$ for all $\b\in\{0, 1\}^m$ with $w(\b) = l$. This completes the induction and hence the proof.
\end{proof}
\subsection{Off-Diagonal Constraints}
The following lemma allows us to capture the off-diagonal constraints.
\begin{lemma}
\label{h_m characterisation}
Let $n<m$ and $n\leq k$. Then $\h_m^G = \alg{\{i\bra{\B{b}}\ket{\B{b}}:\B{b}\in \{0, 1\}^m\}\cup \h_k^G}$.
\end{lemma}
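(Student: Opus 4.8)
The plan is to prove the two inclusions separately. The inclusion $\h_m^G \supseteq \alg{\{i\bra{\B{b}}\ket{\B{b}}:\B{b}\in\{0,1\}^m\}\cup \h_k^G}$ is immediate: every diagonal skew-hermitian $i\bra{\B{b}}\ket{\B{b}}$ commutes with $U(g)=\begin{pmatrix}1&0\\0&\omega\end{pmatrix}^{\otimes m}$ (all these operators are diagonal in the computational basis), hence lies in $\h_m^G$; and $\h_k^G\subseteq\h_m^G$ trivially since $k$-locality is a restriction. Since $\h_m^G$ is a Lie algebra containing both generating sets, it contains the Lie algebra they generate. So the content is the reverse inclusion.

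For $\h_m^G \subseteq \alg{\{i\bra{\B{b}}\ket{\B{b}}\}\cup\h_k^G}$, recall from the displayed formula in the Preliminaries (applied with the $\Z/n\Z$ action, so that "$w(\B{b})=w(\B{b}')$" is replaced by "$w(\B{b})\equiv w(\B{b}')\bmod n$") that $\h_m^G$ is spanned by the diagonal generators $i\bra{\B{b}}\ket{\B{b}}$ together with the off-diagonal pairs $i(\bra{\B{b}}\ket{\B{b}'}+\bra{\B{b}'}\ket{\B{b}})$ and $\bra{\B{b}}\ket{\B{b}'}-\bra{\B{b}'}\ket{\B{b}}$ for $w(\B{b})\equiv w(\B{b}')\bmod n$. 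The diagonal generators are in our set by fiat, so it suffices to show each such off-diagonal pair lies in $\alg{\{i\bra{\B{b}}\ket{\B{b}}\}\cup\h_k^G}$. First I would reduce to the case where $\B{b},\B{b}'$ differ in exactly one coordinate among a set of size $\leq k$: if $w(\B{b})=w(\B{b}')$ exactly (not just mod $n$), then since $\B{b}$ and $\B{b}'$ agree outside a region that can be "connected" by a chain of single-transpositions, and each elementary transposition involves flipping two coordinates (weight-preserving), each link is a $2$-local $U(1)$-symmetric — hence $G$-symmetric — generator, so already lies in $\a_k^G\subseteq\h_k^G$; composing via the Lie bracket (as in Marvian's patching) builds up the general weight-preserving off-diagonal generator. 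The genuinely new case is $w(\B{b})\equiv w(\B{b}')\bmod n$ with $w(\B{b})\neq w(\B{b}')$, where the transition is \emph{not} $U(1)$-symmetric and is \emph{not} $k$-local for $k<m$ in general; here one must use the diagonal generators. The key computational identity is the standard $\mathfrak{su}(2)$-type relation: writing $E=\bra{\B{b}}\ket{\B{b}'}$, the bracket $[i(E+E^\dagger), i(\bra{\B{b}}\ket{\B{b}}-\bra{\B{b}'}\ket{\B{b}'})]$ returns $\pm(E-E^\dagger)$ and vice versa, while $[i(E+E^\dagger), (E-E^\dagger)]$ returns $i(\bra{\B{b}}\ket{\B{b}}-\bra{\B{b}'}\ket{\B{b}'})$, so the diagonal $i(\bra{\B{b}}\ket{\B{b}}-\bra{\B{b}'}\ket{\B{b}'})$ together with \emph{either} off-diagonal generator produces both. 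Thus it is enough to produce, for each relevant pair $(\B{b},\B{b}')$, \emph{one} of the two off-diagonal generators inside $\alg{\{i\bra{\B{b}}\ket{\B{b}}\}\cup\h_k^G}$.

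To produce that one off-diagonal generator I would argue by induction on the "distance" $d$ between $\B{b}$ and $\B{b}'$, i.e. the number of coordinates in which they differ (note $d\equiv 0\bmod n$ when $w$ differs by a multiple of $n$ but, more carefully, $d\geq n$ and $d\equiv$ the weight-difference mod $2$). Pick an index $j\in\supp{\B{b}}\triangle\supp{\B{b}'}$, split the symmetric difference, and realize $E+E^\dagger$ (or $E-E^\dagger$) as (a scalar multiple of) a bracket of a lower-distance off-diagonal generator against a $k$-local $G$-symmetric generator supported on a region of size $\leq k$ containing $j$ and $\leq k-1$ other coordinates from the symmetric difference; since $n\leq k$, one has room to include a full "block" of $n$ coordinates whose weight can be shifted $G$-symmetrically. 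This is precisely the mechanism exhibited concretely in Lemmas \ref{diag characterisation n odd} and \ref{diag characterisation n even} — the operators $\alpha_{\B{b}},\beta_{\B{b}},A_{\B{b}}$ there are exactly such $n$-local $G$-symmetric generators and their brackets walk between diagonal operators of different weight; the same operators, tensored with off-diagonal "spectators" $\bra{\B{b}''}\ket{\B{b}'''}$ on the untouched qubits, walk between off-diagonal generators. The main obstacle is bookkeeping: one must check that the spectator tensor factors and the combinatorial coefficients (the $\binom{l}{n-1}$-type factors appearing in Lemma \ref{diag characterisation n even}) never all vanish simultaneously, so that the target generator is actually extracted rather than trapped in an unresolvable linear combination — but this nonvanishing is exactly what was verified in the diagonal lemmas, and the off-diagonal case introduces no new degeneracy because the spectator factors are linearly independent from the diagonal terms. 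With the off-diagonal generators in hand, and the diagonal generators given, the span of all of them is $\h_m^G$, completing the proof.
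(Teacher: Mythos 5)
Your proposal follows the same overall route as the paper: prove the trivial inclusion, reduce to producing the off-diagonal generators $F(\B{b},\B{b}^{\prime})=\bra{\B{b}}\ket{\B{b}^{\prime}}-\bra{\B{b}^{\prime}}\ket{\B{b}}$, handle the equal-weight case by $2$-local weight-preserving swaps plus the transitivity identity $F(\B{b},\B{b}^{\prime\prime})=[F(\B{b},\B{b}^{\prime}),F(\B{b}^{\prime},\B{b}^{\prime\prime})]$, and handle the case $w(\B{b})\equiv w(\B{b}^{\prime})\bmod n$ with $w(\B{b})\neq w(\B{b}^{\prime})$ by weight shifts of size $n$ using the $n$-local $G$-symmetric operators $\alpha_{\B{d}}$. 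The one place where you leave a real gap is exactly that last step: you propose an induction on Hamming distance with ``spectator'' tensor factors and then worry about whether $\binom{l}{n-1}$-type coefficients conspire to trap the target in an unresolvable linear combination, deferring to the nonvanishing checks of the diagonal lemmas. That worry is a symptom of choosing the wrong bracket. The paper avoids all of it with a single identity: if $\B{b}$ has zeros at positions $r_1,\dots,r_n$ and $\supp{\B{d}}=\{r_1,\dots,r_n\}$, then
$$2\left[\alpha_{\B{d}},\, i\bra{\B{b}}\ket{\B{b}}\right]=F(\B{b}^{\prime},\B{b}),\qquad \supp{\B{b}^{\prime}}=\supp{\B{b}}\cup\{r_1,\dots,r_n\},$$
because the rank-one diagonal generator $i\bra{\B{b}}\ket{\B{b}}$ (which is available by hypothesis in this lemma, unlike in Lemmas \ref{diag characterisation n odd} and \ref{diag characterisation n even}) selects exactly one matrix element of $\alpha_{\B{d}}$ — no spectators, no sums, no combinatorial coefficients. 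Chaining these single-bracket weight raises with the equal-weight permutations via transitivity finishes the argument. So your outline is correct and recoverable, but as written the crucial step is only sketched, and the implementation you gesture at is substantially harder than the one actually needed; you should replace the distance induction by the displayed identity.
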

\begin{proof}
First note that
$$\h_m^G = \span\{i(\bra{\B{b}}\ket{\B{b}^{\prime}}+\bra{\B{b}^{\prime}}\ket{\B{b}}), \bra{\B{b}}\ket{\B{b}^{\prime}}-\bra{\B{b}^{\prime}}\ket{\B{b}}: \B{b}, \B{b}^{\prime}\in \{0, 1\}^m\text{ and } w(\B{b})\equiv w(\B{b}^{\prime})\mod n\}.$$

For $\b, \b^{\prime}\in\{0, 1\}^m$ such that $\b\neq \b^{\prime}$, $[i\bra{\B{b}}\ket{\B{b}}, \bra{\B{b}}\ket{\B{b}^{\prime}}-\bra{\B{b}^{\prime}}\ket{\B{b}}] = i(\bra{\B{b}}\ket{\B{b}^{\prime}}+\bra{\B{b}^{\prime}}\ket{\B{b}})$. Therefore \[\h_m^G = \alg{\{i\bra{\B{b}}\ket{\B{b}}, \bra{\B{b}}\ket{\B{b}^{\prime}}-\bra{\B{b}^{\prime}}\ket{\B{b}}: \B{b}, \B{b}^{\prime}\in \{0, 1\}^m\text{ and } w(\B{b})\equiv w(\B{b}^{\prime})\mod n\}}.\]

Let $\mathfrak{g} := \alg{\{i\bra{\B{b}}\ket{\B{b}}:\B{b}\in \{0, 1\}^m\}\cup \h_k^G}$. For $\B{b}, \B{b}^{\prime}\in\{0, 1\}^m$, let $F(\B{b}, \B{b}^{\prime}):=\bra{\B{b}}\ket{\B{b}^{\prime}}-\bra{\B{b}^{\prime}}\ket{\B{b}}$. Hence it suffices to show that $F(\b, \b^\prime)\in \mathfrak{g}$ for all $\b, \b^\prime\in \{0, 1\}^m$ such that $w(\b)\equiv w(\b^\prime)\mod{n}$.  

Observe that for $\b, \b^{\prime}, \b^{\prime\prime}\in\{0, 1\}^m$ such that $\b, \b^{\prime\prime}\neq \b^{\prime}$, $F(\B{b}, \B{b}^{\prime\prime})=\left[F(\B{b}, \B{b}^{\prime}), F(\B{b}^{\prime}, \B{b}^{\prime\prime})\right]$ (\textit{transitivity property}).

If $b_r\neq b_s$, then $[i\frac{X_{r}X_s+Y_rY_s}{2}, i\bra{\B{b}}\ket{\B{b}}] = F(\B{b}^{\prime}, \B{b})$, where $\B{b}^{\prime}$ is obtained $\B{b}$ by swapping the bits $r$ and $s$. Using this along with the \textit{transitivity property}  and the fact that transpositions generate the entire permutation group \cite{DF}, we conclude that $F(\B{b}, \B{b}^{\prime})\in \mathfrak{g}$ for $\B{b}$ and $\B{b}^{\prime}$ differing by a permutation of bits, i.e., for all $\b, \b^\prime\in \{0, 1\}^m$ such that $w(\b) = w(\b^\prime)$. 

For $\B{d}\in\{0, 1\}^m$ with $w(\B{d}) = n$, let $\alpha_{\B{d}}:=\frac{i}{2}\left(\begin{pmatrix}0&1\\0&0\end{pmatrix}^{\B{d}}+\begin{pmatrix}0&0\\1&0\end{pmatrix}^{\B{d}}\right)\in \h_{k}^G$. 

Let $\B{b}\in\{0, 1\}^m$ with $b_{r_1} = \dots=b_{r_n} = 0$. Let $\B{d}\in\{0, 1\}^m$ be such that $\supp{\B{d}} =\{r_1, r_2, \dots, r_n\}$. Then, $2[\alpha_{\B{d}}, i\bra{\B{b}}\ket{\B{b}}] = F(\B{b}^{\prime},\B{b})$, where $\B{b}^{\prime}\in\{0, 1\}^m$ is such that $\supp{\b^\prime} = \supp{\b}\cup\{r_1, \dots, r_n\}$.

Without loss of generality, let $\B{b}, \B{b}^{\prime}\in\{0, 1\}^m$ with $w(\B{b})\equiv w(\B{b}^{\prime})\mod{n}$ and $w(\B{b})< w(\B{b}^{\prime})$. We show that $F(\b, \b^\prime)\in\mathfrak{g}$. First, keep on increasing the hamming weight by $n$ by replacing $n$ of $0$ bits with $1$ repeatedly by using the \textit{transitivity property} and the observation in the last paragraph to get $F(\B{b},\B{c})\in \mathfrak{g}$ for some $\B{c}$ with $w(\B{c}) = w(\B{b}^{\prime})$. As $F(\B{c}, \b^\prime)\in \mathfrak{g}$, therefore by the \textit{transitivity property}, $F(\B{b},\B{b}^\prime)\in \mathfrak{g}$. 

Thus, $F(\B{b},\B{b}^\prime)\in \mathfrak{g}$ for all $\b, \b^\prime\in \{0, 1\}^m$ such that $w(\b)\equiv w(\b^\prime)\mod{n}$. This completes the proof.
\end{proof}

Using Lemma \ref{diag characterisation n odd} and \ref{h_m characterisation}, we get the part of Theorem \ref{n finite k at least n} concerning odd values of $n$.
\subsection{Patching Argument}
To finish the proof, it remains to patch together the diagonal constraints with the off-diagonal ones to characterise all the elements of $\h_k^G$ for $n$ even. 

Define $\Pi_l:=\sum_{\B{b}\in \{0, 1\}^m: w(\B{b}) \equiv l\mod{n}}\bra{\B{b}}\ket{\B{b}}$ for $l = 0, \dots, n-1$. Observe that the elements of $\h_m^G$ are block-diagonal with respect to $\{\Pi_l\}$.

\begin{lemma}
\label{off diag for n even}
Let $n<m$, $n$ even and $n \leq k$. Then $$\{A\in \h_m^G: \tr{A\Pi_l} = 0\text{ for }l=0,\dots, n-1\} = [\h_m^G, \h_m^G]\subseteq \h_k^G.$$ In particular, $\{X\in \h_m^G: X_{i,i}=0 \;\forall i\}\subseteq \h_k^G$. 
\end{lemma}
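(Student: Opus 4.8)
The plan is to exhibit $\h_k^G$ as a codimension-one ideal of $\h_m^G$, the extra direction being spanned by the \emph{central} element $iZ^{\otimes m}$; the inclusion $[\h_m^G,\h_m^G]\subseteq\h_k^G$ then falls out because the quotient $\h_m^G/\h_k^G$ is one-dimensional and hence abelian. For the set equality, recall that $\h_m^G$ is exactly the space of skew-Hermitian operators commuting with $U(g)=\begin{pmatrix}1&0\\0&\omega\end{pmatrix}^{\otimes m}$ (the defining set $\a_m^G$ is already a Lie algebra), i.e.\ the operators block-diagonal with respect to the eigenspace decomposition $\bigoplus_{l=0}^{n-1}V_l$ of $U(g)$, where $V_l$ is the $\omega^l$-eigenspace (spanned by the computational basis vectors of Hamming weight $\equiv l\bmod n$); thus $\h_m^G=\bigoplus_{l=0}^{n-1}\mathfrak u(V_l)$. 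Since distinct blocks commute and $[\mathfrak u(d),\mathfrak u(d)]=\mathfrak{su}(d)$ for every $d\geq 1$, we get $[\h_m^G,\h_m^G]=\bigoplus_{l}\mathfrak{su}(V_l)$, and as $\Pi_l$ is the orthogonal projection onto $V_l$ an element $A\in\h_m^G$ lies in this subspace precisely when each of its blocks is traceless, i.e.\ precisely when $\tr{A\Pi_l}=0$ for $l=0,\dots,n-1$. That disposes of the first claimed equality, and leaves the inclusion $[\h_m^G,\h_m^G]\subseteq\h_k^G$.

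First I would check that $iZ^{\otimes m}$ is a central element of $\h_m^G$. Since $n$ is even, $U(g)^{n/2}=\begin{pmatrix}1&0\\0&\omega^{n/2}\end{pmatrix}^{\otimes m}$, and $\omega^{n/2}$ squares to $\omega^{n}=1$ while being $\neq 1$ (because $\ord{\omega}=n>n/2$), so $\omega^{n/2}=-1$ and therefore $U(g)^{n/2}=Z^{\otimes m}$. The operators commuting with $Z^{\otimes m}$ form a Lie subalgebra containing $\a_m^G$ (every element of which commutes with $U(g)$, hence with $U(g)^{n/2}$), so all of $\h_m^G=\alg{\a_m^G}$ commutes with $Z^{\otimes m}$; and $iZ^{\otimes m}$, being skew-Hermitian and diagonal, itself lies in $\h_m^G$. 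Next I would upgrade Lemma~\ref{h_m characterisation}, which states $\h_m^G=\alg{\{i\bra{\B{b}}\ket{\B{b}}:\B{b}\in\{0,1\}^m\}\cup\h_k^G}$, to the assertion $\h_m^G=\h_k^G+\R\,iZ^{\otimes m}$: the span $\span\{i\bra{\B{b}}\ket{\B{b}}:\B{b}\in\{0,1\}^m\}$ of all diagonal skew-Hermitian operators equals $\span\{iZ^{\B{b}}:\B{b}\in\{0,1\}^m\}=\span\{iZ^{\B{b}}:w(\B{b})<m\}\oplus\R\,iZ^{\otimes m}$, whose first summand lies in $\h_k^G$ by Lemma~\ref{diag characterisation n even}; hence $\h_m^G=\alg{\h_k^G\cup\{iZ^{\otimes m}\}}$, and because $iZ^{\otimes m}$ centralises the Lie subalgebra $\h_k^G$ this generated Lie algebra is just the vector-space sum $\h_k^G+\R\,iZ^{\otimes m}$.

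Combining, $[\h_m^G,\h_m^G]=[\h_k^G+\R\,iZ^{\otimes m},\ \h_k^G+\R\,iZ^{\otimes m}]=[\h_k^G,\h_k^G]\subseteq\h_k^G$, which is the inclusion sought. The final assertion is then immediate: if $X\in\h_m^G$ has all diagonal entries zero, then $\tr{X\Pi_l}=0$ for every $l$, so $X\in[\h_m^G,\h_m^G]\subseteq\h_k^G$. As a byproduct, $\tr{(iZ^{\otimes m})Z^{\otimes m}}=i\,2^m\neq 0$, so Lemma~\ref{necessity} gives $iZ^{\otimes m}\notin\h_k^G$; thus the sum $\h_k^G+\R\,iZ^{\otimes m}$ is direct and $\dim\h_m^G-\dim\h_k^G=1$. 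The only substantive step is the observation that $Z^{\otimes m}=U(g)^{n/2}$, making $iZ^{\otimes m}$ central — this is exactly the place where the evenness of $n$ enters, and it is what turns the otherwise delicate ``patching'' into a one-line quotient argument.
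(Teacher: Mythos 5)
Your proposal is correct and follows essentially the same route as the paper: the set equality via the block decomposition of $\h_m^G$ with respect to the projections $\Pi_l$ together with perfectness of $\mathfrak{su}$, and the inclusion $[\h_m^G,\h_m^G]\subseteq\h_k^G$ via Lemmas \ref{diag characterisation n even} and \ref{h_m characterisation} plus the centrality of $iZ^{\otimes m}$. Your write-up additionally makes explicit why $iZ^{\otimes m}$ is central (namely $Z^{\otimes m}=U(g)^{n/2}$ when $n$ is even), a point the paper leaves implicit.
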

\begin{proof}
Let 
$\mathcal{D}:=\{A\in \h_m^G: \tr{A\Pi_l} = 0\text{ for }l=0,\dots, n-1\}$. Let $A, B\in \h_m^G$, then $$\tr{[A, B]\Pi_l} = \tr{[A, \Pi_l B]} = 0 \text{  for } l = 0,\dots,n-1.$$ Hence, $[\h_m^G, \h_m^G]\subseteq \mathcal{D}$. 

Now applying the fact that $[\mathfrak{su}(d), \mathfrak{su}(d)] = \mathfrak{su}(d)$ \cite{FH}
to blocks corresponding to each $\Pi_l$ separately, we conclude that for $A\in \mathcal{D}$, $\Pi_l A\Pi_l\in [\h_m^G, \h_m^G]$ for $l=0,\dots, n-1$. Using the fact that $A = \sum_{l=0}^{n-1}\Pi_lA\Pi_l$, we conclude that $A\in [\h_m^G, \h_m^G]$. Thus, $[\h_m^G, \h_m^G] = \mathcal{D}$.

From Lemma \ref{diag characterisation n even} and Lemma \ref{h_m characterisation}, we conclude that $\h_m^G = \alg{\{iZ^{\otimes m}\}\cup \h_k^G}$. As $iZ^{\otimes m}$ commutes with all elements of $\h_m^G$, therefore $[\h_m^G, \h_m^G] \subseteq \h_k^G$.
\end{proof}

Using Lemma \ref{diag characterisation n even} and \ref{off diag for n even}, we get the result for even values of $n$. This finishes the proof of Theorem \ref{n finite k at least n}.

\section{An Aside on the Effect of Different Representations}
\label{Different Reps}
In general, the unitary representation $U(\cdot)$ can act by different operators on each component, that is, $U(g) = u^{(1)}(g)\otimes u^{(2)}(g)\otimes\dots\otimes u^{(m)}(g)$, where the $u^{(j)}$'s need not be equal. 

In this section, we look at what happens for the action of $\Z/2\Z$ on the $m$ qubit system in this more general setting.  Let $G = \langle g\rangle\cong \Z/2\Z$ with unitary representation $U$ of $G$ such that $U(g) = u^{(1)}\otimes u^{(2)}\otimes\dots\otimes u^{(m)}$, where $u^{(1)}, \dots, u^{(m)}$ are $2\times 2$ unitary involutions. 

By spectral decomposition, there exists a $2\times 2$ unitary matrix $P^{(j)}$ such that $P^{(j)} u^{(j)} \left(P^{(j)}\right)^\dagger$ is equal to one of $I,-I,Z$ or $-Z$.

\begin{theorem} 
$\h_k^G = \h_m^G$ if and only if at most $k$ of $u^{(1)}, \dots, u^{(m)}$ are similar to $Z$ or $-Z$.
\end{theorem}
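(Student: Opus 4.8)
The plan is to reduce this statement to Theorem~\ref{n finite k at least n} by a conjugation argument, exactly as in the ``Initial Reductions'' section, but now keeping track of the fact that different qubits may carry different representations. First I would conjugate by $P := P^{(1)}\otimes\dots\otimes P^{(m)}$ so that each $u^{(j)}$ becomes one of $I,-I,Z,-Z$; since $\a_k^G$ transforms by $\a_k^G = P^\dagger\a_k^{G'}P$ under this conjugation (where $G'$ is the action by the diagonalized operators), and $\h_k^G$ is the Lie algebra it generates, it suffices to prove the claim when every $u^{(j)}\in\{I,-I,Z,-Z\}$. Next, observe that replacing $u^{(j)}$ by $-u^{(j)}$ for some subset of indices only changes $U(g)$ by an overall sign $(-1)^{(\text{something})}$, hence by a global phase on each $G$-graded sector, which does not affect the commutant: $[A,U(g)]=0$ is unchanged. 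So we may assume each $u^{(j)}$ is $I$ or $Z$. Let $S\subseteq\{1,\dots,m\}$ be the set of indices with $u^{(j)}=Z$; then $U(g)=Z^{\mathbf{1}_S}$ (in the notation of the excerpt), and $|S|$ is the number of $u^{(j)}$ similar to $\pm Z$, so we must show $\h_k^G=\h_m^G$ iff $|S|\le k$.

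The heart of the argument is then the case $U(g)=Z^{\mathbf{1}_S}$ for a fixed $S$. If $|S|\le k$, then $U(g)$ is itself a $k$-local operator, and in fact the generators of $\h_m^G$ become $k$-local after a suitable relabeling: the $G$-symmetry condition $[A,Z^{\mathbf{1}_S}]=0$ says $A$ is block-diagonal with respect to the parity of the restriction of the bitstring to $S$. I would argue that $\h_m^G$ is generated by $k$-local symmetric operators by an argument parallel to Lemmas~\ref{diag characterisation n odd}, \ref{h_m characterisation} and \ref{off diag for n even}: the diagonal operators $iZ^{\mathbf b}$ are all obtained inductively (here there is no parity obstruction because $U(g)$ has even ``period'' in each $S$-coordinate but the relevant sectors only see the total $S$-parity, and $C_m=Z^{\otimes m}$ need not have vanishing trace against $\h_k^G$ — one should check whether the analogue of Lemma~\ref{necessity} produces a constraint, and I expect it does \emph{not} once $|S|\le k$ because $U(g)$ itself is available as a $k$-local generator and we can build the needed $SU(2)$ triples acting within $S$), and the off-diagonal transitivity argument of Lemma~\ref{h_m characterisation} goes through verbatim using $\alpha_{\mathbf d}$ supported on $n'\le |S|\le k$ qubits. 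Conversely, if $|S|>k$, I would show $\h_k^G\subsetneq\h_m^G$ by exhibiting a linear functional vanishing on $\h_k^G$ but not on $\h_m^G$: any $k$-local $G$-symmetric $A$ acts trivially on some $j\in S$ (since $k<|S|$ and a $k$-local term touches at most $k$ of the $S$-qubits — wait, this needs $A$ to be a single $k$-local term, which is fine since $\a_k^G$ is spanned by such), and then $\tr(A\,Z^{\mathbf{1}_S})=0$ by the same splitting as in Lemma~\ref{necessity}, while $Z^{\mathbf{1}_S}=U(g)\in\h_m^G$ pairs nontrivially with itself; closure under the bracket follows because $Z^{\mathbf{1}_S}$ is central in $\h_m^G$, again as in Lemma~\ref{necessity}.

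The main obstacle I anticipate is the direction $|S|\le k\implies\h_k^G=\h_m^G$: one must verify that the inductive constructions of Lemmas~\ref{diag characterisation n odd}--\ref{off diag for n even} truly have $k$-local analogues in this mixed-representation setting, where the symmetry is governed only by the $S$-parity rather than by a uniform $\Z/n\Z$ action on all qubits. In particular I would need to check carefully that the ``seed'' operators $A_{\mathbf d},\alpha_{\mathbf d},\beta_{\mathbf d}$ can be chosen $G$-symmetric and supported on at most $k$ qubits — the natural choice supports them on roughly $|S|$ qubits inside $S$ plus a few outside, so as long as $|S|\le k$ this is fine, but the bookkeeping of which bitstrings lie in which sector (and confirming there is genuinely no leftover diagonal constraint, unlike the $n$-even case of Theorem~\ref{n finite k at least n}) is where the real work lies. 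I expect the resolution is that because $U(g)$ is an honest $k$-local generator here, $Z^{\mathbf{1}_S}\in\h_k^G$ directly, and this is exactly what kills the would-be codimension-one constraint that appears in the $n$-even uniform case.
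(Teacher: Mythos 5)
Your reduction to the case $U(g)=Z^{\b}$ (absorbing the diagonalizing conjugation and the irrelevant signs) and your necessity argument for $w(\b)>k$ --- the functional $A\mapsto\tr{AZ^{\b}}$, pigeonhole on a single $k$-local term touching fewer than $|\supp{\b}|$ of the $S$-qubits, and centrality of $Z^{\b}$ to get closure under brackets --- coincide with the paper's proof of that direction.

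The gap is the sufficiency direction $w(\b)\leq k\Rightarrow\h_k^G=\h_m^G$, which you explicitly leave unverified and for which you gesture at the wrong machinery. The concrete idea you are missing is that on every qubit $j\notin\supp{\b}$ the symmetry acts trivially, so \emph{arbitrary} $1$-local skew-Hermitian operators such as $iX_j$ and $iY_j$ already lie in $\a_k^G$; this, not the availability of $iZ^{\b}$ as a generator, is what drives the paper's construction. For the diagonal part: any $\B{d}$ with $w(\B{d})=t>k\geq w(\b)$ has some $l_t\in\supp{\B{d}}\setminus\supp{\b}$, and two brackets --- first with the symmetric $2$-local operator $iZ_{l_{t-1}}Y_{l_t}$, then with $iY_{l_t}$ --- lift a weight-$(t-1)$ diagonal (available by induction) to $iZ^{\B{d}}$. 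No analogue of the $A_{\b},\alpha_{\b},\beta_{\b}$ triples or of the parity bookkeeping of Lemma \ref{diag characterisation n even} is needed, and no residual constraint can arise because the induction reaches every $iZ^{\B{d}}$ including $iZ^{\otimes m}$. For the off-diagonal part, the paper runs the argument of Lemma \ref{h_m characterisation} restricted to $\supp{\b}$ (where the relevant $\alpha_{\B{d}}$ is $2$-local since $n=2\leq k$) and handles bit flips outside $\supp{\b}$ with $[iX_j,i\bra{\B{c}}\ket{\B{c}}]$ followed by the transitivity identity. Your remark that $iZ^{\b}\in\a_k^G$ kills the would-be codimension-one constraint is a correct non-obstruction heuristic, but it is not a generation argument, so as written only the forward implication of the theorem is actually proved.
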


As $U(g) = PZ^{\B{b}}P^\dagger$ or $-PZ^{\B{b}}P^\dagger$, where $P:=\prod_{j=1}^m P_j^{(j)}$ and $\B{b}\in\{0, 1\}^m$ with $b_j = 1$ if and only if $P^{(j)} u^{(j)} \left(P^{(j)}\right)^\dagger = Z$ or $-Z$. Therefore, it suffices to prove the following lemma.

\begin{lemma}
$\h_m^G = \h_k^G$ if and only if  $U
(g) = Z^{\b}$ for some $\b\in\{0, 1\}^m$ with $w(\B{b})\leq k$.
\end{lemma}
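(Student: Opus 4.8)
The plan is to reduce to the cyclic case $G \cong \Z/2\Z$ acting by $U(g) = Z^{\b}$ and then invoke the results already proved. First I would observe that if $U(g) = -Z^{\b}$, then conjugation by $U(g)$ agrees with conjugation by $Z^{\b}$, so the two representations induce the same symmetric operators; thus without loss of generality $U(g) = Z^{\b}$. Next, if $w(\b) = m$, then $U(g) = Z^{\otimes m}$ is a genuine $m$-qubit symmetry in the sense of the earlier sections (with $n = 2$), and by relabeling qubits we may assume $\supp{\b} = \{1, \dots, w(\b)\}$. The key structural point is that a $k$-local, $G$-symmetric skew-hermitian operator on the $m$-qubit system factors naturally: the qubits outside $\supp{\b}$ carry a trivial group action, so $A$ is $G$-symmetric if and only if its restriction to the $w(\b)$ active qubits commutes with $Z^{\otimes w(\b)}$. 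I would make this precise by showing $\a_k^G$ (on $m$ qubits, symmetry $Z^{\b}$) is spanned by operators of the form $B \otimes C$ where $B$ is a $k'$-local operator on the active qubits commuting with $Z^{\otimes w(\b)}$ for appropriate $k' \le k$ and $C$ is an arbitrary Pauli string on the inactive qubits.

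For the forward direction, suppose at most $k$ of the $u^{(j)}$ are similar to $\pm Z$, i.e. $w(\b) \le k$. Then the active subsystem has at most $k$ qubits, so \emph{every} skew-hermitian operator on it commuting with $Z^{\otimes w(\b)}$ is already $k$-local as an operator on the full $m$-qubit system (it touches at most $w(\b) \le k$ active qubits, and we may pad with at most $k - w(\b)$ inactive qubits if needed). Combined with the fact that arbitrary single-qubit skew-hermitian generators on the inactive qubits are $1$-local and $G$-symmetric, I would argue that $\a_k^G$ already contains a generating set for $\h_m^G$; more carefully, $\h_m^G$ is generated by $\{i\bra{\b}\ket{\b}\}$ together with $\h_k^G$ by a variant of Lemma \ref{h_m characterisation}, but when $w(\b) \le k$ the diagonal operators $i\bra{\b}\ket{\b}$ are themselves $k$-local (being diagonal, they are determined by at most $m$ bits, but one can build them from products localized appropriately — here I would instead note directly that when $w(\b)\le k$ every basis element of $\h_m^G$ listed in the displayed formula is at most $\max(w(\b), \text{something}) $-local and hence lies in $\a_k^G$). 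This shows $\h_k^G = \h_m^G$.

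For the reverse direction, suppose $w(\b) > k$; I must show $\h_k^G \subsetneq \h_m^G$. The idea is to produce a linear functional vanishing on $\h_k^G$ but not on all of $\h_m^G$. Since $U(g) = Z^{\b}$ with $w(\b) \ge k+1 > k$, any $A \in \a_k^G$, being $k$-local, acts trivially on at least one qubit in $\supp{\b}$; as in the proof of Lemma \ref{necessity}, writing $A$ as a sum of tensor products $A' \otimes I_j \otimes A''$ with $j \in \supp{\b}$ gives $\tr{A \, Z^{\b}} = 0$, because the trace factors through a $\tr{Z} = 0$. This property is preserved under the Lie bracket by the same computation ($\tr{[D,E]Z^{\b}} = \tr{[D, E Z^{\b}]} = 0$ since $D$ and $E$ commute with $Z^{\b} = U(g)$), so $\tr{A Z^{\b}} = 0$ for all $A \in \h_k^G$. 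But $i Z^{\b} \in \h_m^G$ and $\tr{(iZ^{\b})(Z^{\b})} = i \, 2^m \ne 0$, so $\h_k^G \ne \h_m^G$.

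The main obstacle I anticipate is making the tensor-factorization of $\a_k^G$ and $\h_k^G$ along $\supp{\b}$ fully rigorous — in particular, correctly tracking the locality budget when an operator on the active qubits needs fewer than $k$ qubits and the remaining budget can or cannot be spent on inactive qubits — and handling the edge case $w(\b) = m$ (no inactive qubits) versus $w(\b) < m$ uniformly. Once the reduction "$G$-symmetric $\iff$ active part commutes with $Z^{\otimes w(\b)}$, inactive part free" is established, both directions follow from elementary locality counting plus the trace argument above, with no need for the patching machinery of Section 7 in this special $n = 2$ setting.
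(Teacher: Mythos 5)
Your reverse direction (showing $w(\B{b})>k$ forces $\h_k^G\subsetneq\h_m^G$ via the functional $A\mapsto\tr{AZ^{\B{b}}}$, which vanishes on $\a_k^G$ by pigeonhole and $\tr{Z}=0$ and is preserved under brackets, while $iZ^{\B{b}}\in\h_m^G$ pairs nontrivially with it) is correct and is exactly the paper's argument. The gap is in the other direction. Your key claim there --- that when $w(\B{b})\le k$ ``every basis element of $\h_m^G$ listed in the displayed formula is at most $\max(w(\B{b}),\dots)$-local and hence lies in $\a_k^G$'' --- is false. Take $\B{b}=\B{0}$ and $k=2$: then $\h_m^G=\mathfrak{u}(2^m)$, whose basis elements $i\bra{\B{c}}\ket{\B{c}}$ and $iZ^{\otimes m}$ act nontrivially on all $m$ qubits. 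More generally, any diagonal $iZ^{\B{d}}$ with $w(\B{d})>k$ and any rank-one projector onto a computational basis state lies in $\h_m^G$ but not in $\a_k^G$. Nothing about having only $w(\B{b})\le k$ ``active'' qubits makes the symmetric operators on the full $m$-qubit system $k$-local; the inactive qubits contribute arbitrarily much support. So the inclusion $\h_m^G\subseteq\h_k^G$ cannot follow from ``elementary locality counting'' on a generating set --- it is a genuine Lie-algebra generation statement, and it is the entire content of this direction.

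What the paper actually does, and what your proposal is missing, is a two-stage generation argument. First, for the diagonal part, it proves $iZ^{\B{d}}\in\h_k^G$ for all $w(\B{d})>k$ by induction on $w(\B{d})$: the pigeonhole gives a qubit $l_t\in\supp{\B{d}}$ outside $\supp{\B{b}}$, so the $2$-local symmetric operators $iZ_{l_{t-1}}Y_{l_t}$ and $iY_{l_t}$ are available, and two commutators lift $iZ_{l_1}\cdots Z_{l_{t-2}}Z_{l_t}$ (known by induction) up to $iZ^{\B{d}}$. Second, for the off-diagonal part, it reruns the argument of Lemma \ref{h_m characterisation} restricted to $\supp{\B{b}}$ and uses the $1$-local symmetric operators $iX_j$ for $j\notin\supp{\B{b}}$, together with the transitivity identity $F(\B{c},\B{c}'')=[F(\B{c},\B{c}'),F(\B{c}',\B{c}'')]$, to reach all generators $F(\B{c},\B{c}')$ with $w(\B{c}\cdot\B{b})\equiv w(\B{c}'\cdot\B{b})\bmod 2$. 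Your tensor-factorization of $\a_k^G$ along $\supp{\B{b}}$ is a reasonable way to organize the hypotheses, but it does not substitute for these commutator computations; you would still need to supply them (or an equivalent appeal to $2$-local universality on the inactive block combined with a bridging argument across the active block) to close the proof.
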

\begin{proof}
Suppose $w(\B{b})>k$, then $\tr{Z^{\B{b}}A} = 0$ for all $A\in \h_k^G$ by an argument similar to the one given in Lemma \ref{necessity}. Thus, we have the implication in one direction.

For $\B{c}, \B{d}\in \{0, 1\}^{m}$, let $\B{c}\cdot{\B{d}},\neg \B{c}\in\{0, 1\}^m$ be defined such that $\supp{\B{c}\cdot \B{d}} = \supp{\B{c}}\cap\supp{\B{d}}$ and  $\supp{\neg \B{c}} = \{1, \dots, m\}\setminus \supp{\B{c}}$, respectively. 

To see the converse, let $w(\B{b})\leq k$. To generate all diagonal elements, it suffices to generate the elements of the set $\{iZ^{\B{d}}: w(\B{d})>k\}$. We will prove by induction that $iZ^{\B{d}}\in \h_k^G$ for all $d\in\{0, 1\}^m$ with $w(\B{d})\geq k$. By definition of $\h_k^G$, the base case follows. Let $\supp{\B{d}} = \{{l_1}, \dots, {l_t}\}$ for $t>k$. By pigeon hole principle, there exists $j\in \{1,\dots, t \}$ such that $Z^{\B{b}}$ acts trivially on qubit $l_j$. Without loss of generality, suppose $l_j = l_t$. By induction hypothesis, $iZ_{l_1}Z_{l_2}\dots Z_{l_{t-2}}Z_{l_{t}}\in \h_k^G$. Taking commutator of $iZ_{l_1}Z_{l_2}\dots Z_{l_{t-2}}Z_{l_{t}}$ with $iZ_{l_{t-1}}Y_{l_t}$, we conclude that $iZ_{l_1}Z_{l_2}\dots Z_{l_{t-1}}X_{l_t}\in \h_k^G$.  Taking commutator of $iZ_{l_1}Z_{l_2}\dots Z_{l_{t-1}}X_{l_t}$ and $iY_{l_t}$ gives us that $iZ^{\B{d}}\in \h_k^G$. This completes the induction.

Now to generate the off-diagonal elements, first note that
$$\h_m^G = \alg{\{i\bra{\B{c}}\ket{\B{c}}, \bra{\B{c}}\ket{\B{c}^{\prime}}-\bra{\B{c}^{\prime}}\ket{\B{c}}: \B{c}, \B{c}^{\prime}\in \{0, 1\}^m\text{ and } w(\B{c}\cdot{\b})\equiv w(\B{c}^{\prime}\cdot\B{b})\mod 2\}}.$$

We can apply the argument of Lemma \ref{h_m characterisation} restricted to $\supp{\b}$ to conclude that $$\{\bra{\B{c}}\ket{\B{c}^{\prime}}-\bra{\B{c}^{\prime}}\ket{\B{c}}: \B{c}, \B{c}^{\prime}\in \{0, 1\}^m, \B{c}\cdot{(\neg\B{b})} = \B{c}^{\prime}\cdot{(\neg\B{b})}\text{ and } w(\B{c}\cdot{\b})\equiv w(\B{c}^{\prime}\cdot\B{b})\mod 2\}\subseteq \h_k^G.$$ If $ b_j\neq 1$, then $iX_{j}\in \h_k^G$. Also note that $[iX_j, i\bra{\B{c}}\ket{\B{c}}] = \bra{\B{c}^{\prime}}\ket{\B{c}}-\bra{\B{c}}\ket{\B{c}^{\prime}}$, where $\B{c}^{\prime}$ is obtained from $\B{c}$ by flipping $c_j$, that is, $\B{c}^\prime$ has $\neg c_j$ instead of $c_j$. Using the fact that for $\B{c}, \B{c}^{\prime}, \B{c}^{\prime\prime}\in\{0, 1\}^m$ such that $\B{c}, \B{c}^{\prime\prime}\neq \B{c}^{\prime}$, $F(\B{c}, \B{c}^{\prime\prime})=\left[F(\B{c}, \B{c}^{\prime}), F(\B{c}^{\prime}, \B{c}^{\prime\prime})\right]$ repeatedly, where $F(\B{c}, \B{c}^{\prime}):=\bra{\B{c}}\ket{\B{c}^{\prime}}-\bra{\B{c}^{\prime}}\ket{\B{c}}$, with the aforementioned fact, we conclude that $\h_k^G = \h_m^G$. This completes the proof.
\end{proof}

The above result suggests that the universality shows behaviour that can be compared to a discrete analogue of phase transition. It will be interesting to see the generalisations of this behaviour to more complicated groups.

\section{Acknowledgements}
\label{Acknowledgements}
The author would like to thank Prof. Hadi Salmasian, University of Ottawa, for his guidance, insightful discussions throughout the summer of $2022$ and suggesting changes to the initial draft that helped improve the exposition. The author was supported by Mitacs GRI program during the research work.

\end{document}